\def\ps@headings{%
\def\@oddhead{\mbox{}\scriptsize\rightmark \hfil \thepage}%
\def\@evenhead{\scriptsize\thepage \hfil \leftmark\mbox{}}%
\def\@oddfoot{}%
\def\@evenfoot{}}
\def\Route{R}
\def\route{r}
\def\line{l} 
\def\dia{d_{max}}%used only for D?%
\def\bfone{{\bf 1}}
\def\intersect{\neq\emptyset}
\def\defeq{\stackrel{\mathrm{def}}{=}}
\def\dD{dx_d\,dy_d\,d\theta}
\def\disconnect{\Pr(s\not \leftrightarrow t)}
\newcommand{\sizex}[1]{\| #1 \|}
\newcommand{\lengthx}[1]{| #1 |}
\newtheorem{lemma}{\bf{Lemma}}
\newtheorem{theorem}{\bf{Theorem}}
\newtheorem{cor}{\bf{Corollary}}
\newtheorem{example}{\bf{Example}}
\title{Spatial Design of Physical Network Robust against Earthquakes}
\author{Hiroshi~Saito,~\IEEEmembership{Fellow,~IEEE} 
\thanks{Manuscript received }
\thanks{Hiroshi Saito is with NTT Network Technology Laboratories, 3-9-11, Midori-cho, Musashino-shi, Tokyo 180-8585, Japan, E-mail: saito.hiroshi@lab.ntt.co.jp, URL: http://www9.plala.or.jp/hslab/, Phone: +81 422 59 4300, Fax: +81 422 59 5671.}}
\date{}
\begin{document}

\maketitle

\begin{abstract}
This paper analyzes the survivability of a physical network against earthquakes and proposes spatial network design rules to make a network robust against earthquakes.
The disaster area model used is fairly generic and bounded.
The proposed design rules for physical networks include: (i) a shorter zigzag route can reduce the probability that a network intersects a disaster area, (ii) an additive performance metric, such as repair cost, is independent of the network shape if the route length is fixed, and (iii) additional routes within a ring network does not decrease the probability that all the routes between a given pair of nodes intersect the disaster area, but a wider detour route decreases it.
Formulas for evaluating the probability of disconnecting two given nodes are also derived.
An optimal server placement is shown as an application of the theoretical results.
These analysis results are validated through empirical earthquake data.
\end{abstract}
\begin{IEEEkeywords}
Disaster, network survivability, network design, network architecture, integral geometry, geometric probability, probability of maintaining connectivity, network availability, network reliability, network failure.
\end{IEEEkeywords}

\section{Introduction}
Japan is a country prone to earthquakes.
An earthquake occurred on March 11, 2011 in northeastern Japan, and the resulting tsunami caused fatal damage to the network.
In NTT's networks \cite{ntt}, approximately 1.5 million circuits for fixed-line services, approximately 6,700 pieces of mobile base-station equipment, and approximately 15,000 circuits for corporate data communication services were damaged.
It was also reported that transmission lines were disconnected in 90 routes, 18 exchange office buildings were destroyed, 23 buildings were submerged, approximately 65,000 telephone poles were destroyed, and submersion and physical damage to aerial cables reached about 6,300 kilometers. 
Many other countries are also prone to earthquakes \cite{china}.

Through such experiences, network operators have made efforts to increase network robustness against earthquakes.
For example, microwave transit systems were expanded as a means of increasing network survivability after a large earthquake struck in 1968, and a transportable earth station of a satellite communication system was developed because of an earthquake in 1993 \cite{ntt-east}.
Nevertheless, we do not have a design method of creating a network robust against earthquakes.
This paper responds to this need.

The contributions of this paper are as follows. (1) This paper introduces a theoretical method for evaluating metrics such as probability of disconnection for a bounded and general-shaped disaster area.
(2) This theoretical method explicitly reveals physical network design rules robust against earthquakes.
For example, (i) a shorter zigzag route can reduce the probability that a network intersects the disaster area, (ii) an additive performance metric is independent of the network shape if the route length is fixed, and (iii) additional routes within a ring network does not decrease the probability that all the routes between a pair of nodes intersect the disaster area.
(3) Actual earthquake intensity maps are used for disasters areas to evaluate the validity of the theoretical results.

The organization of this paper is as follows. 
Related work is described in Section II,
the model and notations used in this paper is explained in Section III,
the analysis is discussed in Section IV,
numerical examples are discussed in Section V, and
the paper is concluded in Section VI.

\section{Related work}
Although a large number of theoretical papers have been published evaluating the reliability, availability, and survivability for a given network, most of these papers focus on a single failure (or independent failures) of a network node or a link.
For a given set of network topologies and failure rates of network entities, they typically evaluate a metric, such as the probability that a pair of network entities can be connected \cite{book}.
In a disaster, the assumption of a single or independent failure is not valid.
In addition, the physical shape of a network is important for evaluating the impact of a disaster on network survivability; however, most studies have not covered this point.

The following studies focused on network survivability by taking into account correlated failure and geometric/geographical conditions:
Grubesic \cite{geography} evaluated the network survivability of the current Internet based on geographical data.
Although he focused on the physical route of a network, it was a case study, and no mathematical models or methods were provided.
Liew and Lu \cite{survivability} proposed a framework to evaluate network survivability during a disaster and introduced a survivability function to various metrics.
Although their framework can introduce correlated failures, they did not propose any method or model of correlations.
%Furthermore, they did not consider the physical shape of the disaster area or that of the network due to the lack of a mathematical framework.
Wu et al. \cite{underseaCableFailure} discussed the optimization of the physical route of an undersea cable by assuming a disk-shaped disaster area.
By assuming a rectangular route, the length of an edge is determined by minimizing cost while maintaining a higher probability of connecting two cities than the threshold.

Another direction is the extension of the minimum-cut-max-flow type problems by taking into account a disaster area.
As far as I know, Bienstock \cite{OR} initiated the study of this problem.
Algorithms computing the minimum number of disaster areas disconnecting the source and sink nodes were investigated when all the edges intersecting the disaster areas are removed.
Sen et al. \cite{sen} proposed a region-based connectivity as a metric for fault-tolerance.
Assuming the region is a disk-shaped disaster area, polynomial time algorithms calculating region-based connectivity are provided.
Neumayer et al. \cite{discMinCut} discussed the geographical min-cut, defined as the minimum number of disk-shaped disaster areas to disconnect a pair of nodes, and the geographical max-flow, defined as the maximum number of paths that are not disconnected by a single disaster area, and showed that geographical min-cut is not equal to geographical max-flow.
Agarwal et al. studied algorithms that find a disaster location having the highest expected impact on a network, where the impact is defined by various metrics such as the number of failed components \cite{wdmFailure}.

Recently, Neumayer et al. published two papers intended to cover network survivability in a disaster \cite{failureToN}, \cite{failureINFOCOM}.
In their network model, there is a set of line segments of which end points are locations of network center buildings and the disaster model is a line segment or a circle \cite{failureToN}.
They proposed to use an optimization technique to find the worst case disaster.
On the other hand, Neumayer and Modiano \cite{failureINFOCOM} used geometric probability (integral geometry) to model the randomness of a disaster.
Their network model is, again, a set of line segments of which end points are locations of network center buildings and the disaster model is a line.
These papers emphasize the polynomial time algorithm to evaluate metrics.
The paper \cite{infocom2014} is in this direction.
Its network model is with nodes and links consisting of line-segments, and the disaster area is assumed to be a half plane.
Nodes and links in a disaster area is probabilistically in failure.
In addition to the algorithm evaluating metrics, this paper proposes an update method and an optimal placement of network entities to make the network robust.

Saito \cite{ToNsaito} derives explicit formulas of various metrics under the assumption that the disaster area is a half plane (or a broad strip) and that the physical network shape is fairly generic.
Based on these formulas, a rule of thumb of network design robust against disasters was proposed in that paper.
For example, (1) reducing the convex hull of the physical route reduces the expected number of nodes that cannot connect to the destination. 
(2) The probability of maintaining the connectivity of two nodes on a ring-type network cannot be changed by changing the physical route of that network. 
(3) The effect of making a ring-type network is identical to that of a single physical route implemented by the straight-line route.

This paper extends \cite{ToNsaito} regarding the following points:
The finite convex disaster (earthquake) area model is assumed. That is, the metrics derived in this paper shows the impact of the disaster area size and the shape of the disaster area. In the numerical example, actual disaster area data are used.
Optimal network design regarding some metrics is explicitly shown.

Although this paper does not require users to have knowledge of integral geometry (geometric probability), derivation of some basic results follows a method used in integral geometry.
In the network and network application communities, integral geometry is not commonly used.
In addition to \cite{failureINFOCOM}, \cite{infocom2014}, and \cite{ToNsaito}, the following papers use it.
For example, a series of papers \cite{infocom}, \cite{mobileComp}, \cite{signal} proposed shape estimation methods derived using integral geometry for a target object based on reports from sensor nodes of unknown locations.
Lazos et al. \cite{detection} and Lazos and Poovendran \cite{lazos} directly applied the results \cite{Santalo} to the analysis of detecting an object moving in a straight line and to the evaluation of the probability of $k$-coverage. Kwon and Shroff \cite{routing} also applied integral geometry to the analysis of straight-line routing, which is an approximation of the shortest path routing, and Choi and Das \cite{energy} used it to determine sensors in energy-conserving data gathering.

\section{Model and notations}
\subsection{Notations}
The following notations are used for the remainder of this paper.
\begin{itemize}
\item $d(u,v)$: the distance between $u$ and $v$.
\item $\line(u,v)$: a line segment between $u$ and $v$.
\item $L(c)$: the length of curve $c$.
\item $C(c_1,c_2)$: a set of interior points bounded by two curves $c_1$ and $c_2$, where the two end points of $c_1$ are identical to those of $c_2$. Its boundary is $c_1\cup c_2$.
\item $\lengthx{A}$: the perimeter length of a bounded area $A$.
\item $\sizex{A}$: the size of a bounded set $A$.
(When $A$ is defined in the parameter space $(x_d,y_d,\theta)$, $\sizex{A}\defeq
\int_A\dD$.
When $A$ is defined in $\mathbb{R}^2$, $\sizex{A}\defeq\int_A dx\,dy$.)
\item $\dia (A)$: the diameter of a bounded area $A$.
That is, $\dia (A)=\max_{u,v\subset A}d(u,v)$.
\end{itemize}
%For two bounded and convex sets $C_1$ and $C_2$ and their boundaries $\partial C_1$ and $\partial C_2$, $C_1\leq_{curv}C_2$ denote that the greatest radius of the curvature of $\partial C_1$ is less than or equal to the smallest radius of the curvature of $\partial C_2$.

\subsection{Model}
Let $\route(s,t)\subset A_0\subset \mathbb{R}^2$ be a physical route between a pair of nodes $s$ and $t$, where $A_0$ is an area of interest and is bounded and convex.
(For simplicity, $(s,t)$ may be removed in $\route(s,t)$.)
In this paper, the meaning of a route $\route$ is not limited to the connectivity between $s$ and $t$.
The meaning of $\route$ implies the physical route shape.
When there are two routes between $s$ and $t$ and they are disjoint except for $s$ and $t$, we say that $s$ and $t$ are on a ring-type network and that these two routes form a ring-type network.
In the analysis in this paper, $\disconnect$ (the probability disconnecting $s$ and $t$) and $\Pr(\route\cap D\neq \emptyset)$ are mainly discussed, where $D$ is explained below.

Let $D$ be a disaster area caused by an earthquake.
Assume that $D$ is convex in the remainder of this paper if not explicitly indicated otherwise.
The position of $D$ is characterized by the position of its reference point $(x_d,y_d)$ and by its direction $\theta$ formed by a reference line fixed to $D$ with another reference line fixed to the fixed coordinates (Fig. \ref{disasterArea}).
The disaster area is modeled as a randomly placed area affecting $A_0$.
That is, $(x_d,y_d,\theta)$ is in $\Omega(A_0)$ in the parameter space where $\Omega(A_0)\defeq\{(x_d,y_d,\theta)|D(x_d,y_d,\theta)\cap A_0\intersect\}$.

\begin{figure}[htb] 
\begin{center} 
\includegraphics[width=8cm,clip]{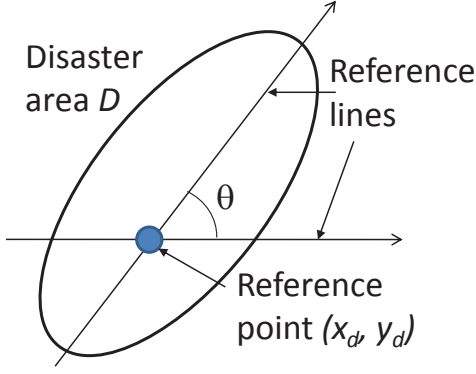} 
\caption{Disaster area model} 
\label{disasterArea} 
\end{center} 
\end{figure}

Assume that $(x_d,y_d)$ and $\theta$ are uniformly distributed in $\Omega(A_0)$ because we have no prior information regarding the location of $D$.
We can define a probability that a set of positions of $D$ satisfies a certain condition $X_c$.
Because $(x_d,y_d)$ and $\theta$ are uniformly distributed in $\Omega(A_0)$, this probability is given by the ratio of the size of the subspace $X(X_c)\defeq\{(x_d,y_d,\theta)|(x_d,y_d,\theta)$ satisfies $X_c\}$ to the size of $\Omega(A_0)$.
That is, the probability that a set of positions of $D$ satisfies $X_c$ is $\sizex{X(X_c)}/\sizex{\Omega(A_0)}$.
(This is formally called a geometric probability based on integral geometry \cite{Santalo}.)
For convex $D$ and convex $A_0$, it is known that $\sizex{\Omega(A_0)}$ is given as follows (Eq. (6.48) in \cite{Santalo}).
\begin{equation}\label{size_two_convex}
\sizex{\Omega(A_0)}=2\pi(\sizex{D}+\sizex{A_0})+\lengthx{D}\cdot\lengthx{A_0}
\end{equation}

\begin{example}
Assume that $D$ is a disk of radius $r_d$, $A_0$ is a disk of radius $r_0$, and condition $X_c$ is $\{(x_d,y_d,\theta)|D(x_d,y_d,\theta)\cap \line(s,t)\intersect \}$.
For this example, we can easily draw a picture because $D$ is independent of $\theta$ (Fig. \ref{simpleExample}).
Note that the position of reference point $(x_d,y_d)$ must be in $X(X_c)$ to satisfy $D\cap \line(s,t)\intersect$.
Because $(x_d,y_d)$ is uniformly distributed in $\Omega(A_0)$ if no condition is specified, the probability that a set of positions of $D$ satisfies $X_c$ is $\int_{X(X_c)}\dD/\int_{\Omega(A_0)}\dD$.
Here, $\int_{X(X_c)}\dD=2\pi(\pi r_d^2+2d(s,t)r_d)$ and $\int_{\Omega(A_0)}\dD=2\pi^2(r_0+r_d)^2$.

\begin{figure}[htb] 
\begin{center} 
\includegraphics[width=8cm,clip]{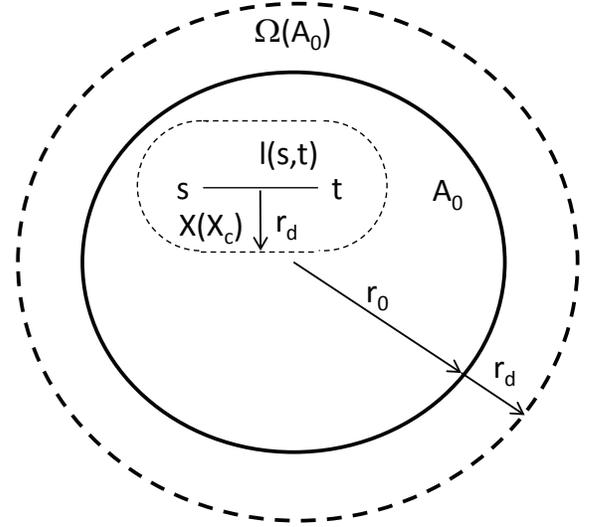} 
\caption{Simple example of $\{(x_d,y_d,\theta)|D(x_d,y_d,\theta)\cap \line(s,t)\intersect \}$} 
\label{simpleExample} 
\end{center} 
\end{figure}

\end{example}

Similarly to the definition of probability, we can also define an expectation.
For each position $(x_d,y_d,\theta)$ of $D$, we can define a quantity $q(x_d,y_d,\theta)$.
An example of $q(x_d,y_d,\theta)$ is the length of chord $D\cap \line(s,t)$.
Because $(x_d,y_d)$ and $\theta$ are uniformly distributed in $\Omega(A_0)$, the probability of the position $[x_d,x_d+dx_d)\times[y_d,y_d+dy_d)\times[\theta,\theta+d\theta)$
is $\dD/\sizex{\Omega(A_0)}$.
Therefore, the expectation of the quantity $q$ can be defined by $\int_{\Omega(A_0)}q(x_d,y_d,\theta)\dD/\sizex{\Omega(A_0)}$.

\section{Analysis}
\subsection{Probability of routes intersecting disaster area}
In this subsection, the probability that routes intersect $D$ is analyzed.

\subsubsection{Single route}
\begin{lemma}\label{lemma1}
Assume that route $\route$ consists of two line segments $l_1,l_2$ connecting to an inner angle $\phi\leq \pi$, and that the distance between an end point not included in $l_i$ and $l_i$ is larger than $\dia (D)$ for $i=1,2$.
When $L(\route)$ is fixed, $\int_{\route\cap D\intersect}\dD$ is given by $2\lengthx{D}L(\route)+2\pi \sizex{D}-f(\phi)$, where $f(\phi)$ is a decreasing function of $\phi\leq \pi$, $f(\phi)\geq 0$.
\end{lemma}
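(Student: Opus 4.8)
The plan is to split the zigzag into its two legs, insert the integral--geometric measure of a single segment, and then pin down the ``corner'' term in which $D$ meets both legs at once; that last step is where the diameter hypothesis does its work. Write $v$ for the common vertex of $l_1$ and $l_2$, so that $\route$ runs $s\to v\to t$ with $l_1=\line(s,v)$ and $l_2=\line(v,t)$, and for a curve $c$ put $\Omega_c\defeq\{(x_d,y_d,\theta)|D\cap c\intersect\}$, so $\sizex{\Omega_c}=\int_{c\cap D\intersect}\dD$.

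Since $D\cap\route\intersect$ exactly when $D$ meets $l_1$ or $l_2$, we have $\Omega_\route=\Omega_{l_1}\cup\Omega_{l_2}$ and hence $\sizex{\Omega_\route}=\sizex{\Omega_{l_1}}+\sizex{\Omega_{l_2}}-\sizex{\Omega_{l_1}\cap\Omega_{l_2}}$. Applying (\ref{size_two_convex}) with the segment $l_i$ regarded as a degenerate convex body of zero area and perimeter $2L(l_i)$ (equivalently, the Minkowski--sum computation behind the Example) gives $\sizex{\Omega_{l_i}}=2\pi\sizex{D}+2L(l_i)\lengthx{D}$, so $\sizex{\Omega_{l_1}}+\sizex{\Omega_{l_2}}=4\pi\sizex{D}+2L(\route)\lengthx{D}$; it remains to show $\sizex{\Omega_{l_1}\cap\Omega_{l_2}}=2\pi\sizex{D}+f(\phi)$ with $f$ nonnegative, decreasing in $\phi\le\pi$, and $f(\pi)=0$, for then $f(\phi)\defeq\sizex{\Omega_{l_1}\cap\Omega_{l_2}}-2\pi\sizex{D}$ and the stated formula drops out. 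Nonnegativity is immediate: $v\in l_1\cap l_2$, so every placement with $v\in D$ lies in $\Omega_{l_1}\cap\Omega_{l_2}$, and for fixed $\theta$ those placements form a translate of a reflected, rotated copy of $D$, of area $\sizex{D}$, so this sub-family has measure exactly $2\pi\sizex{D}$.

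To see that $\sizex{\Omega_{l_1}\cap\Omega_{l_2}}$ depends on $\phi$ only, extend each $l_i$ to the ray $\rho_i$ out of $v$ and show $\Omega_{l_1}\cap\Omega_{l_2}=\Omega_{\rho_1}\cap\Omega_{\rho_2}$. The nontrivial inclusion ``$\supseteq$'' is where the hypothesis enters; relabel so $l_1$ is the shorter leg. If $D$ meets both rays, fix any $p_2\in D\cap\rho_2$; then for every $p_1\in D\cap\rho_1$ we have $d(p_1,p_2)\le\dia(D)$, and, writing $p_i=v+d(v,p_i)u_i$ with $u_1\cdot u_2=\cos\phi$, the identity $d(p_1,p_2)^2=d(v,p_1)^2+d(v,p_2)^2-2d(v,p_1)d(v,p_2)\cos\phi$ yields $d(v,p_1)\le\dia(D)/\sin\phi$ when $\phi\le\pi/2$ and $d(v,p_1)\le\dia(D)$ when $\phi\ge\pi/2$. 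On the other hand, because $l_1$ is the shorter leg one checks $d(s,l_2)=L(l_1)\sin\phi$ for $\phi\le\pi/2$ and $d(s,l_2)=L(l_1)$ for $\phi\ge\pi/2$, so the hypothesis $\dia(D)<d(s,l_2)$ forces $d(v,p_1)<L(l_1)\le L(l_i)$. Thus $D\cap\rho_1\subseteq l_1$ and, symmetrically, $D\cap\rho_2\subseteq l_2$, so $D$ already meets $l_1$ and $l_2$. Hence $\sizex{\Omega_{l_1}\cap\Omega_{l_2}}$ equals the ``meets both rays'' measure, which is invariant under rigid motions of the wedge and therefore a function of the opening angle $\phi$ alone.

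Finally, for the ``meets both rays'' measure: a placement contributes iff either $v\in D$ --- which accounts for exactly the $2\pi\sizex{D}$ subtracted off in the definition of $f$ --- or $v\notin D$ and both ray directions lie in the angular shadow $\Sigma(v)$ of $D$ seen from $v$ (the directions along which the half-line from $v$ meets $D$), which for convex $D$ with $v\notin D$ is a connected arc of angular width $<\pi$. Orienting the two rays symmetrically about a fixed bisector (legitimate by rotation invariance of the measure), at $\phi=\pi$ the two ray directions are antipodal and cannot both lie in an arc of width $<\pi$, so $f(\pi)=0$; and if $\phi_1<\phi_2$, any placement whose $\phi_2$-separated directions both lie in $\Sigma(v)$ has the entire short arc between them --- in particular the $\phi_1$-separated directions --- in $\Sigma(v)$, so the admissible family only grows as $\phi$ decreases, i.e.\ $f$ is decreasing. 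The step I expect to be hardest is the reduction $\Omega_{l_1}\cap\Omega_{l_2}=\Omega_{\rho_1}\cap\Omega_{\rho_2}$: one must verify that the diameter hypothesis is precisely what keeps every ``meets both legs'' placement localized near $v$, and that is where the exact form of the hypothesis, together with the split on whether $\phi\le\pi/2$ and on which leg is shorter, is essential.
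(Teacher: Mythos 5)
Your proof is correct, and it takes a genuinely different route from the paper's. The paper fixes $\theta$ and decomposes the favorable positions $(x_d,y_d)$ directly: four parallelograms flanking the two segments, sector-like regions at the three convex vertexes that reassemble into a copy of $D$, and an overlap-minus-surplus correction at the concave vertex; the term $2\lengthx{D}L(\route)$ then comes from the support-function identity $\int_{-\pi}^{\pi}\rho(\theta)\,d\theta=\lengthx{D}$, the term $2\pi\sizex{D}$ from the vertex sectors, and $f(\phi)$ is the $\theta$-integral of the overlap-minus-surplus area, whose nonnegativity and monotonicity are read off from a figure. You instead use inclusion--exclusion together with the known single-segment kinematic measure $2\pi\sizex{D}+2L(l_i)\lengthx{D}$ (the degenerate case of Eq.~(\ref{size_two_convex}), which the paper itself endorses in the $n=1$ remark after Lemma \ref{lemma2}), and you identify $f(\phi)$ as $\sizex{\Omega_{l_1}\cap\Omega_{l_2}}-2\pi\sizex{D}$; the diameter hypothesis enters exactly once, to replace the legs by full rays from the vertex so that this intersection measure is a function of the wedge angle alone, nonnegativity comes from the sub-family of placements containing the vertex, and monotonicity follows from the nesting of wedge directions inside the visual-angle arc of $D$ seen from the vertex. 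What each approach buys: the paper's per-$\theta$ bookkeeping generalizes mechanically to the $n$-segment case of Lemma \ref{lemma2} (one correction term per concave vertex) and gives the geometric meaning of $f$ as an overlap defect, whereas your route avoids the delicate area decomposition, replaces the figure-based monotonicity claim with an actual argument (the arc-nesting step, including $f(\pi)=0$), and makes transparent why $f$ depends only on $\phi$ and $D$ --- a fact the paper needs later but justifies only briefly. Two minor points: your argument establishes that $f$ is non-increasing (strict decrease would need a further argument, but the paper's own proof is no stronger), and in the ray reduction you invoke only the half of the hypothesis concerning the far endpoint of the shorter leg, which is legitimate since it is part of the stated assumption; your case split on $\phi\le\pi/2$ versus $\phi\ge\pi/2$ and on which leg is shorter is exactly what is needed to make $d(s,l_2)$ take the simple closed form you use.
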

\begin{proof}
For fixed $\theta$, $(x_d,y_d)$ satisfying $\{\route\cap D\intersect\}$ is shown in Figure \ref{fixed_theta}.
As shown in this figure, $\int_{\route\cap D\intersect}dx_d\,dy_d$ consists of seven parts.
The first four parts are four parallelograms located on both sides of the two line segments
(shown with black dotted lines in Fig. \ref{fixed_theta}.)
It should be noted that they overlap.
The overlap area is hatched in this figure.

The remaining three parts are associated with three vertexes including two end points.
At each of the three convex vertexes (i)-(iii), $D$ comes in contact with the vertex and the reference point of $D$ draws a curve.
This curve and the two line segments from the vertex to the reference point form a sector-like shaped area.
If we fix $D$ and move vertexes, these curves become parts of the boundary of $D$, and these sector-like shaped areas become parts of $D$.
In the window in Fig. \ref{fixed_theta}, the parts of $D$ corresponding to these sector-like shaped areas are formed at vertexes (i)-(iii).
Because there is a concave vertex (ii'), the sum of these parts are not identical to $D$.
In Fig. \ref{fixed_theta}, the surplus area, which is the sum of the parts of $D$ corresponding to these sector-like shaped areas formed at vertexes (i)-(iii) minus $D$, is the sector-like shaped area specified by two tangent points, each of which is formed by $D$ and each of the two line segments $l_1,l_2$.

\begin{figure}[htb] 
\begin{center} 
\includegraphics[width=8cm,clip]{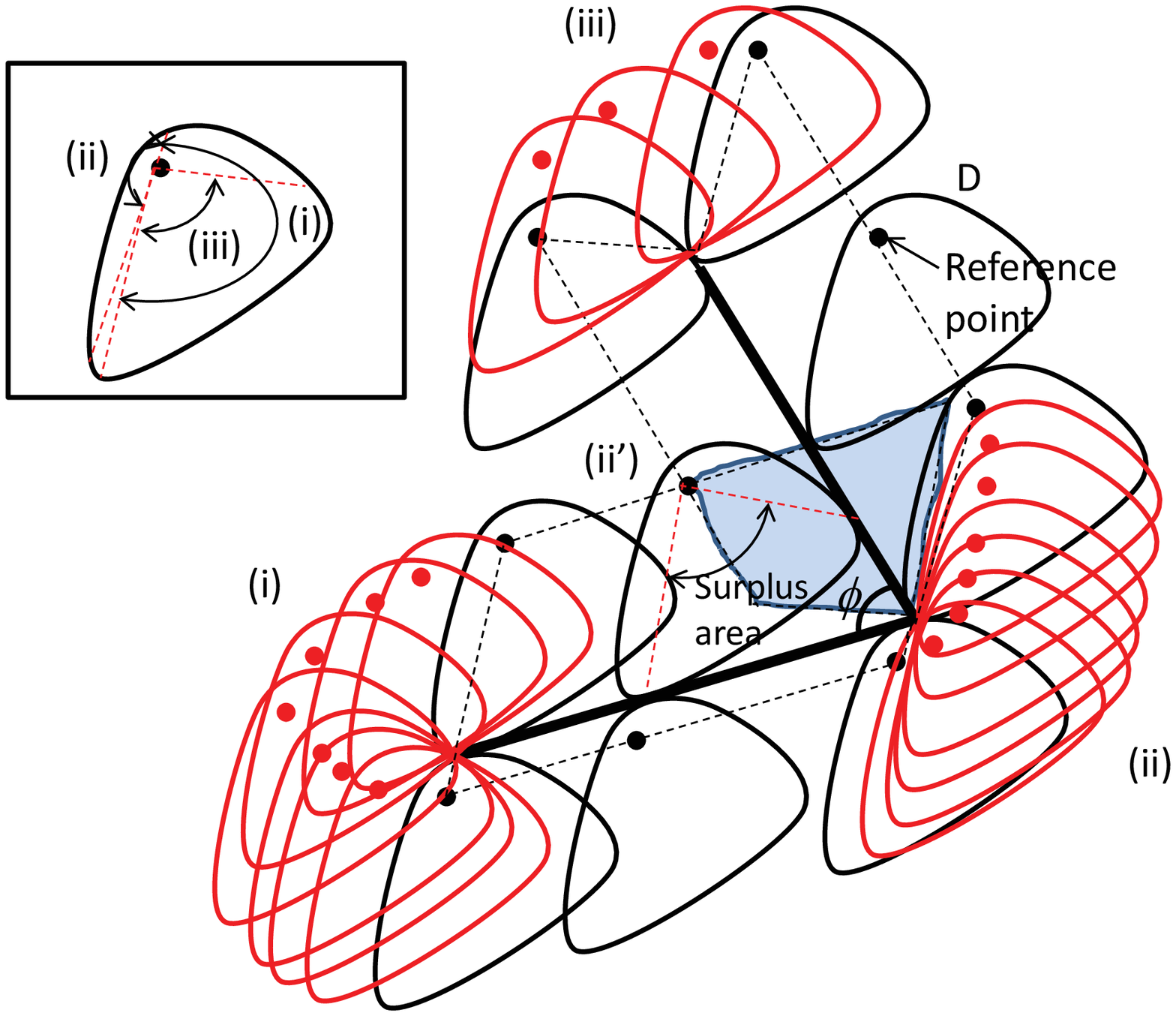} 
\caption{Illustration of $\int_{\route\cap D\intersect}dx_d\,dy_d$} 
\label{fixed_theta} 
\end{center} 
\end{figure}

In summary, $\int_{\route_1\cap D\intersect}dx_d\,dy_d$ is equal to the sum of the size of the four parallelograms, $D$, and the surplus area, minus the overlap area size.
Each side length of these parallelograms is the length of $l_1$ or $l_2$, and its height is the distance from reference point $(x_d,y_d)$ to each of these two line segments.
In fact, these two line segments are on the tangent lines of $D$.
Each of these tangent lines is called a line of support of $D$ \cite{dictionary}.
For the distance $\rho$ from the reference point to the line of support of $D$, the following equation is known: $\int_{-\pi}^\pi \rho(\theta) d\theta=\lengthx{D}$ (Eq. (1.6) in \cite{Santalo}).

Figure \ref{fixed_theta_overlap}-(a) focuses on the surplus and overlap areas.
The former is included in the area outlined with thick (red) dotted lines, and the latter is outlined with thick (black) lines.
Because the area outlined with thick (red) dotted lines and the overlap area are congruent, the size of the overlap area minus the size of the surplus area is equal to that of the hatched area in Fig. \ref{fixed_theta_overlap}-(a).
Let $f(\xi,\phi|\theta)$ be this size of the hatched area in Fig. \ref{fixed_theta_overlap}-(a), where $\xi$ is the angle specifying a tangent point and is a function of $\phi$ and $\theta$.
As shown in Fig. \ref{fixed_theta_overlap}-(b), this size decreases as $\phi$ increases.
Therefore, $f(\xi,\phi|\theta)$ is larger than or equal to 0, and a decreasing function of $\phi$ for each $\theta$.

\begin{figure}[htb] 
\begin{center} 
\includegraphics[width=8cm,clip]{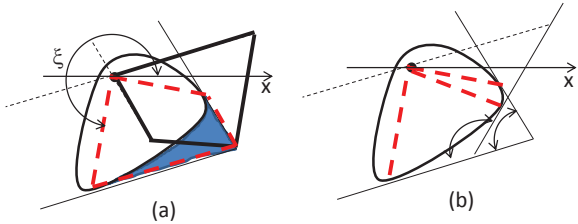} 
\caption{Illustration of surplus and overlap areas} 
\label{fixed_theta_overlap} 
\end{center} 
\end{figure}

Note that $\int_{\route\cap D\intersect}\dD$ is the integral on $\theta$ of the size of the four parallelograms and $D$ minus $f(\xi,\phi|\theta)$.
First, the integral of the size of the four parallelograms on $\theta$ is equal to $\sum_{i=1}^2 \int_{-\pi}^\pi(\rho_i(\theta)+\rho_i(\theta+\pi))L(l_i)d\theta=2\lengthx{D}\sum_iL(l_i)=2\lengthx{D}L(\route)$.
Second, the integral of the size $D$ on $\theta$ is $2\pi \sizex{D}$.
Third, $f(\phi)\defeq \int_{-\pi}^\pi f(\xi,\phi|\theta)d\theta$ is larger than or equal to 0, and a decreasing function of $\phi$.

The assumption that the distance between an end point not included in $l_i$ and $l_i$ is larger than $\dia (D)$ is required because $D$ comes into contact with $l_1$ and $l_2$ at the tangent points at (ii').
\end{proof}

Here is a simple example of this lemma.
\begin{example}
Assume that $D$ is a disk of radius $r_d$.
Figure \ref{1vertexEx}-(a) shows $\{(x_d,y_d)|D\cap\route \intersect\}$ and Fig. \ref{1vertexEx}-(b) highlights the overlap area, which is the quadrangle in red.
As shown in this figure, $\int_{D\cap\route \intersect}dx_d\,dy_d=2(L(l_1)+L(l_2))r_d+(3\pi/2-\phi/2)r_d^2-r_d^2/\tan(\phi/2)$.
Therefore, $\int_{\route\cap D\intersect}\dD=4\pi(L(l_1)+L(l_2))r_d+\pi(3\pi-\phi)r_d^2-2\pi r_d^2/\tan(\phi/2)$.
Hence, $f(\phi)=-\pi(\pi-\phi)r_d^2+2\pi r_d^2/\tan(\phi/2)$.

\begin{figure}[htb] 
\begin{center} 
\includegraphics[width=8cm,clip]{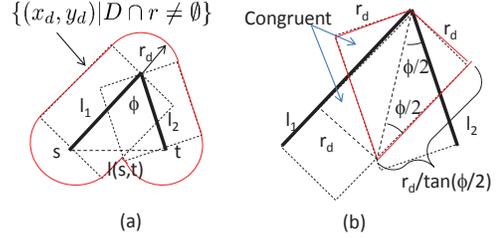} 
\caption{Simple example of Lemma \ref{lemma1}} 
\label{1vertexEx} 
\end{center} 
\end{figure}
\end{example}

%When $L(l_1)$, $L(l_2)$, and $L(l(s,t))$ are given, 
%\begin{equation}
%\phi=\arccos(\frac{L(l_1)^2+L(l_2)^2-L(l(s,t))^2}{2L(l_1)L(l_2)}).\label{1vertex}
%\end{equation}
%Without loss of generality, $L(l_1)\leq L(l_2)$.
%For fixed $L(l_1)+L(l_2)=L(\route)$, $(L(\route)-L(l(s,t)))/2\leq L(l_1)\leq L(\route)/2$.
%In addtion, the assumption of this lemma regarding the end point and the line segment, $L(l_1)\sin\phi \geq \dia (D)$.
%Therefore, if $\dia (D)> L(l_1)$, this lemma does not work.
%To make the set $\{\dia (D)\leq L(l_1)\}$ non-empty, $\dia (D)\leq L(\route)/2$.
%In this example, assume $\dia (D)\leq L(\route)/2$.
%Then, $\max(\dia (D),(L(\route)-L(l(s,t)))/2)\leq L(l_1)\leq L(\route)/2$.

%By using Eq. (\ref{1vertex}), the condition $L(l_1)\sin\phi \geq \dia (D)$ becomes $4(\dia (D)^2+L(\route)^2-L(l(s,t))^2)L(l_1)^2-4L(\route)(2\dia (D)^2+L(\route)^2-L(l(s,t))^2)L(l_1)+(L(\route)^2-L(l(s,t))^2)^2+4\dia (D)^2L(\route)^2\leq 0$.
%For small $\dia (D)$, the equation that the left-hand-side=0 has two positive solution $x_-,x_+$, where $x_-\leq \frac{L(\route)}{2}(1+\frac{\dia (D)^2}{\dia (D)^2+L(\route)^2-L(l(s,t))^2})\leq x_+$.
%Therefore, $\max(x_-,\dia (D),(L(\route)-L(l(s,t)))/2)\leq L(l_1)\leq L(\route)/2$.

%Because Eq. (\ref{1vertex}) is an increase function of $L(l_1)$, $\int_{\route\cap D\intersect}\dD$ is an increase function of $L(l_1)$.
%Therefore, $\min \phi$ and $\min\int_{\route\cap D\intersect}\dD$ are achieved when $L(l_1)=\max(x_-,\dia (D),(L(\route)-L(l(s,t)))/2)$, and $\max\phi$ and $\max\int_{\route\cap D\intersect}\dD$ are achieved when $L(l_1)=L(\route)/2$.

Lemma \ref{lemma1} can be generalized.
\begin{lemma}\label{lemma2}
Assume that $\route$ consists of $n$ line segments, and that the distance between the $i$-th line segment and any point on the $j$-th line segments is larger than $\dia (D)$ for $j\neq i-1,i,i+1$ and for all $i$.
(One of two end points of $\route$ is considered as a part of the 0-th and first line segments, and the other one is a part of the $n$-th and $(n+1)$-th line segments for simplifying the notation.)
Let $\phi_i\leq \pi$ be the inner angle of the $i$-th and $(i+1)$-th line segments ($i=1,2\cdots,n-1$).
When $L(\route)$ is fixed, $\int_{\route\cap D\intersect}\dD$ is given by $2\lengthx{D}L(\route)+2\pi \sizex{D}-\sum_i f(\phi_i)$, where $f(\phi_i)$ is a decreasing function of $\phi_i\leq \pi$, $f(\phi_i)\geq 0$.
\end{lemma}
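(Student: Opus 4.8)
\emph{Proof strategy.} The plan is to repeat, essentially verbatim, the parallelogram-and-sector decomposition used for Lemma~\ref{lemma1}, the only genuinely new ingredient being that the distance hypothesis isolates the contribution of each of the $n-1$ interior vertices, so that the resulting corrections simply add. First, fixing $\theta$, I would write $\{(x_d,y_d)\mid\route\cap D\intersect\}=\bigcup_{i=1}^{n}S_i(\theta)$ with $S_i(\theta)\defeq\{(x_d,y_d)\mid D(x_d,y_d,\theta)\cap l_i\intersect\}$; as in the proof of Lemma~\ref{lemma1}, each $S_i(\theta)$ is the union of two parallelograms flanking $l_i$ (bases $L(l_i)$, heights the support distances $\rho_i(\theta)$ and $\rho_i(\theta+\pi)$ to the two lines of support parallel to $l_i$) together with two sector-like caps at the endpoints of $l_i$. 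If $D$ met both $l_i$ and $l_j$, it would contain a point of $l_i$ and a point of $l_j$ at mutual distance at most $\dia(D)$, which the hypothesis forbids when $|i-j|\ge 2$; hence $S_i(\theta)\cap S_j(\theta)=\emptyset$ whenever $|i-j|\ge 2$, and the same diameter bound confines $S_i\cap S_{i+1}$ to a neighbourhood of the shared vertex $v_i$ within which $D$ can touch no other segment. (The convention that the route endpoints serve as the $0$-th and $(n{+}1)$-th segments makes the hypothesis, read at those endpoints, coincide with the endpoint condition of Lemma~\ref{lemma1}.)

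Given this localisation, inclusion--exclusion gives $\sizex{\bigcup_i S_i(\theta)}=\sum_i\sizex{S_i(\theta)}-\sum_{i=1}^{n-1}\sizex{S_i(\theta)\cap S_{i+1}(\theta)}$. Next I would group, at each interior vertex $v_i$, the two caps of $S_i$ and $S_{i+1}$ meeting there with the overlap term $\sizex{S_i\cap S_{i+1}}$. By the localisation this group is exactly the ``surplus minus overlap'' configuration analysed around the single bend in the proof of Lemma~\ref{lemma1} (Figs.~\ref{fixed_theta}, \ref{fixed_theta_overlap}): its net contribution is a sector-like portion of $D$ minus a hatched region $f(\xi_i,\phi_i\mid\theta)\ge 0$ that is decreasing in $\phi_i$ for every $\theta$, with $\xi_i$ a function of $\phi_i$ and $\theta$. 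The two route-endpoint caps carry no correction, and, the route being a simple open polygonal path, the $n+1$ vertex sectors (the two endpoint caps plus the portions at the $n-1$ interior vertices) fit together to form exactly one copy of $D$. Thus, for fixed $\theta$, $\sizex{\bigcup_i S_i(\theta)}=\sum_{i=1}^{n}(\rho_i(\theta)+\rho_i(\theta+\pi))L(l_i)+\sizex{D}-\sum_{i=1}^{n-1}f(\xi_i,\phi_i\mid\theta)$. Integrating over $\theta\in[-\pi,\pi]$ with $\int_{-\pi}^{\pi}\rho(\theta)\,d\theta=\lengthx{D}$ then yields
\[
\int_{\route\cap D\intersect}\dD=2\lengthx{D}L(\route)+2\pi\sizex{D}-\sum_{i=1}^{n-1}f(\phi_i),
\]
where $f(\phi_i)\defeq\int_{-\pi}^{\pi}f(\xi_i,\phi_i\mid\theta)\,d\theta$ inherits nonnegativity and monotonicity from its integrand (and is the same function $f$ as in Lemma~\ref{lemma1}).

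I expect the main obstacle to be the sector bookkeeping in the second paragraph: verifying that, no matter how the route zigzags and regardless of the handedness of the successive bends, the sectors and cut corners at the $n+1$ vertices contribute exactly one $\sizex{D}$, and that no cross terms appear either between different interior vertices or between non-adjacent segments. This is exactly where both roles of the distance hypothesis are needed --- it makes $S_i$ and $S_j$ disjoint for $|i-j|\ge 2$, and it keeps each interior-vertex neighbourhood an isolated copy of the single-bend picture of Lemma~\ref{lemma1}, so that the $n-1$ corrections $f(\phi_i)$ are genuinely independent and additive. A convenient check is the disk case, where each $f(\phi_i)$ reduces to $-\pi(\pi-\phi_i)r_d^{2}+2\pi r_d^{2}/\tan(\phi_i/2)$, the quantity obtained after Lemma~\ref{lemma1}.
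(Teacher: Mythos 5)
Your argument is correct, and it uses the same geometric ingredients as the paper's proof (support-distance bands integrating to $2\lengthx{D}L(\route)$, and the single-bend correction $f(\xi_i,\phi_i|\theta)$ from Lemma~\ref{lemma1}), but the accounting is organized differently. The paper redoes the decomposition globally for fixed $\theta$: parallelograms on both sides of every segment, sector-like areas at the convex vertex sides, overlaps at the concave sides, and then a boundary-trace argument (Fig.~\ref{fixed_theta_angle}) showing that the signed sum of the vertex sectors equals exactly one copy of $D$, with ``overlap minus surplus'' giving $f(\xi_i,\phi_i|\theta)$ at each bend. You instead work with the per-segment neighbourhoods $S_i(\theta)$, whose areas are $L(l_i)(\rho_i(\theta)+\rho_i(\theta+\pi))+\sizex{D}$, use the distance hypothesis to get $S_i\cap S_j=\emptyset$ for $|i-j|\ge 2$ and to confine $S_i\cap S_{i+1}$ to the isolated single-bend configuration, and let inclusion--exclusion do the counting: the single copy of $D$ emerges as $n-(n-1)$ rather than from the trace argument, and $\sizex{S_i\cap S_{i+1}}=\sizex{D}+f(\xi_i,\phi_i|\theta)$ is exactly what Lemma~\ref{lemma1} yields for two segments, so nonnegativity and monotonicity of $f$ are inherited rather than re-proved. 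This buys a cleaner, less error-prone bookkeeping (no clockwise/counterclockwise sector trace, Lemma~\ref{lemma1} reused verbatim at each bend); the one step you assert without proof is that the adjacent intersection coincides with the infinite-wedge corner set --- i.e., $D$ meeting $l_{i+1}$ cannot reach the line of $l_i$ beyond the far vertex $v_{i-1}$, which follows from convexity of the distance-to-$l_{i+1}$ function along that line together with $d(v_{i-1},l_{i+1})>\dia(D)$ --- but the paper leaves the corresponding use of the hypothesis at the same level of assertion, so you are no less rigorous. Your closing remark that the $n+1$ vertex sectors assemble into one copy of $D$ is redundant once the inclusion--exclusion arithmetic is in place, and your disk check correctly reproduces the paper's $f(\phi)$.
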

\begin{proof}
For fixed $\theta$, $(x_d,y_d)$ satisfying $\{\route\cap D\intersect\}$ is shown in Fig. \ref{fixed_theta_multi}.
As shown in this figure, $\int_{\route\cap D\intersect}dx_d\,dy_d$ consists of several parts in two categories.
The first category is associated with parallelograms located on both sides of the line segments.
For the concave vertexes, their overlap areas appear.
They are hatched and denoted as overlap-(x') in the figure.

\begin{figure}[htb] 
\begin{center} 
\includegraphics[width=8cm,clip]{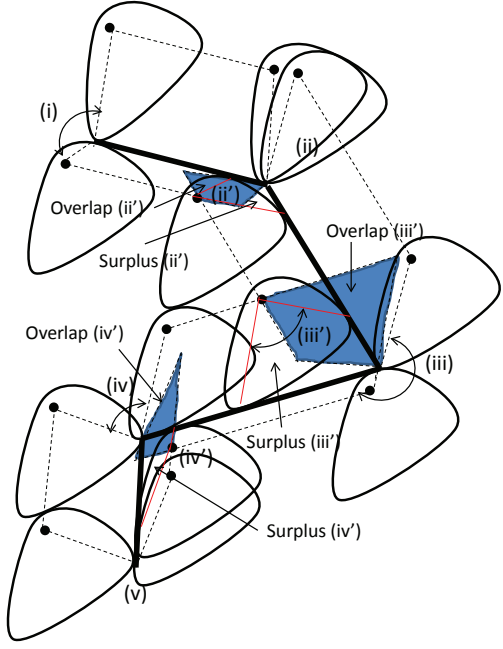} 
\caption{Illustration of $\int_{\route\cap D\intersect}dx_d\,dy_d$: $\route$ consisting of more than two line segments} 
\label{fixed_theta_multi} 
\end{center} 
\end{figure}

The second category is associated with convex vertexes including end points of $\route$.
At each vertex, $D$ comes into contact with the vertex and the reference point of $D$ forms a curve.
This curve and the two line segments from the vertex to the reference point form a sector-like shaped area.
If we fix $D$ and move the vertex, this curve becomes a part of the boundary of $D$ and this sector-like shaped area becomes a part of $D$.
In Fig. \ref{fixed_theta_angle}, parts of $D$ corresponding to these sector-like shaped areas are described.
When $D$ starts around (i) in Fig. \ref{fixed_theta_multi}, it moves through (ii), (iii), (iv'), (v), (iv), (iii'), (ii'), and finally returns to (i). The trace on the boundary of $D$ corresponding to this movement starts from ^^ ^^ Start" in Fig. \ref{fixed_theta_angle}, moves clockwise in the sector-like shaped areas (i), (ii), and (iii), counterclockwise in (iv'), clockwise in (v) and (iv), and counter-clockwise in (iii'), (ii'), and finally returns to ^^ ^^ Start."
Therefore, the sum of these sector-like shaped areas corresponding to the convex vertexes minus these sector-like shaped areas corresponding to the concave vertexes is equal to $D$.
The sector-like shaped area corresponding to the concave vertex (x') is denoted as surplus-(x') in Fig. \ref{fixed_theta_multi}.

Similar to Fig. \ref{fixed_theta_overlap}-(a), ^^ ^^ surplus" is included in a congruent area of ^^ ^^ overlap" for each concave vertex.
Thus, the surplus area is smaller than the overlap area for each concave vertex.
Similar to Fig. \ref{fixed_theta_overlap}-(a), let $f(\xi_i,\phi_i|\theta)\geq 0$ be the overlap area size minus the surplus area size at the vertex of $\phi_i$ with fixed $\theta$. 
Similar to Fig. \ref{fixed_theta_overlap}-(b), $f(\xi_i,\phi_i|\theta)$ is a decreasing function of $\phi_i$.
Note that $f(\phi_i)\defeq \int_{-\pi}^{\pi} f(\xi_i,\phi_i|\theta) d\theta$ becomes independent of $\xi_i$ and is the same for any concave vertex because it is determined by $D$ and a given inner angle $\phi_i\leq \pi$.

Similar to Lemma \ref{lemma1}, $\int_{\route\cap D\intersect}\dD$ is $2\lengthx{D}L(\route)+2\pi \sizex{D}-\sum_i f(\phi_i)$, where $f(\phi_i)$ is a decreasing function of $\phi_i$.
The assumption that the distance between the $i$-th line segment and any point on the $j$-th line segment is larger than $\dia (D)$ for $j\neq i-1,i,i+1$ and for all $i$ is required.
This is because, under this assumption, $D$ in contact with the $i$-th line segment does not cause overlap with $D$ in contact with the $j$-th line segment where $j\neq i-1,i,i+1$.

\begin{figure}[htb] 
\begin{center} 
\includegraphics[width=8cm,clip]{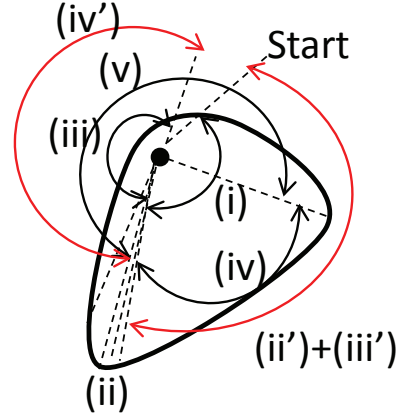} 
\caption{Parts of $D$ corresponding to sector-like shaped areas} 
\label{fixed_theta_angle} 
\end{center} 
\end{figure}

\end{proof}
When $n=1$, the term $\sum_i f(\phi_i)$ disappears.
As a result, $\int_{\route\cap D\intersect}\dD=2\lengthx{D}L(\route)+2\pi \sizex{D}$.
This result is consistent with Eq. (6.48) in \cite{Santalo}.

The following theorem based on Lemma \ref{lemma2} asserts that the zigzag route reduces $\Pr(\route\cap D\intersect)$.
If a route intersecting $D$ is disconnected due to a strong earthquake and there is a single route between $s$ and $t$, $\Pr(\route\cap D\intersect)$ is identical to $\disconnect$.
Therefore, under this assumption, the following theorem means that the zigzag route reduces $\disconnect$.
This is almost consistent with the result in \cite{infocom2014}, \cite{ToNsaito}, where a route with a shorter perimeter length of its convex-hull reduces $\disconnect$ when $D$ is modeled by a half-plane.

\begin{theorem}\label{single_theorem}
Assume that $\route$ and $\route'$ consist of $n$ line segments.
(One of two end points of each route is considered as a part of the 0-th and first line segments, and the other one is a part of the $n$-th and $(n+1)$-th line segments.)
Assume that the following assumption is satisfied:
The distance between the $i$-th line segment and any point on the $j$-th line segment is assumed to be larger than $\dia (D)$ for $j\neq i-1,i,i+1$ and for all $i$.

Let $\phi_i\leq \pi$ ($\phi_i'\leq \pi$) be the inner angle of the $i$-th and $(i+1)$-th line segments of $\route$ ($\route'$) for $i=1,2\cdots, n-1$.
Assume that $L(\route)\leq L(\route')$ and that $\{\phi_i\}_i\leq \{\phi'_j\}_j$.
(The meaning of $\{\phi_i\}_i\leq \{\phi'_j\}_j$ is as follows. When $\phi_{i_1}\leq \cdots\leq \phi_{i_{n-1}}$ and $\phi'_{j_1}\leq \cdots\leq \phi'_{j_{n-1}}$, $\phi_{i_k}\leq \phi'_{j_k}$ for $k=1,\cdots,n-1$.)
Then, $\Pr(\route\cap D\intersect)\leq \Pr(\route'\cap D\intersect)$.
\end{theorem}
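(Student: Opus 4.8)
The plan is to reduce the statement to Lemma~\ref{lemma2} together with an elementary coordinatewise comparison of the two angle-defect sums. First I would write
$$\Pr(\route\cap D\intersect)=\frac{\int_{\route\cap D\intersect}\dD}{\sizex{\Omega(A_0)}},$$
and observe that the denominator $\sizex{\Omega(A_0)}$ from (\ref{size_two_convex}) depends only on $D$ and $A_0$ and not on the route, hence it is the same for $\route$ and $\route'$; so it suffices to compare the numerators $\int_{\route\cap D\intersect}\dD$ and $\int_{\route'\cap D\intersect}\dD$. The hypotheses of Lemma~\ref{lemma2} (the distance condition between the $i$-th and $j$-th line segments for $j\neq i-1,i,i+1$) are precisely the assumptions imposed here on $\route$ and on $\route'$, so the lemma applies to each route, and with the \emph{same} function $f$: the proof of Lemma~\ref{lemma2} shows that $f(\phi_i)$ is determined by $D$ and the inner angle $\phi_i$ alone, so the two formulas are directly comparable.

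Second, Lemma~\ref{lemma2} gives $\int_{\route\cap D\intersect}\dD=2\lengthx{D}L(\route)+2\pi\sizex{D}-\sum_{i}f(\phi_i)$ and the analogous expression for $\route'$ with $L(\route')$ and $\{\phi'_j\}_j$. Subtracting, the $2\pi\sizex{D}$ terms cancel, so the desired inequality $\Pr(\route\cap D\intersect)\leq\Pr(\route'\cap D\intersect)$ is equivalent to
$$2\lengthx{D}\bigl(L(\route)-L(\route')\bigr)\leq\sum_{i=1}^{n-1}f(\phi_i)-\sum_{j=1}^{n-1}f(\phi'_j).$$

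Third, I would bound the two sides separately. Because $\lengthx{D}\geq 0$ and $L(\route)\leq L(\route')$ by hypothesis, the left-hand side is $\leq 0$. For the right-hand side, relabel the angles so that $\phi_{i_1}\leq\cdots\leq\phi_{i_{n-1}}$ and $\phi'_{j_1}\leq\cdots\leq\phi'_{j_{n-1}}$; the hypothesis $\{\phi_i\}_i\leq\{\phi'_j\}_j$ means $\phi_{i_k}\leq\phi'_{j_k}$ for each $k$, and since $f$ is decreasing this gives $f(\phi_{i_k})\geq f(\phi'_{j_k})$ for each $k$, whence $\sum_i f(\phi_i)\geq\sum_j f(\phi'_j)$ and the right-hand side is $\geq 0$. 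Combining the two bounds, the left-hand side is $\leq 0\leq$ the right-hand side, which proves the claim.

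I do not expect a genuine analytic obstacle here, since the heavy lifting is already done in Lemma~\ref{lemma2}; the only points requiring care are (a) confirming that the defect function $f$ is the same for the two routes, so that the two closed-form expressions are legitimately comparable term by term, and (b) noting that the angle hypothesis enters only through term-by-term domination of the sorted angle sequences, so that monotonicity of $f$ can be applied coordinatewise and no rearrangement inequality is needed.
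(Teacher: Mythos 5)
Your proposal is correct and follows essentially the same route as the paper's own proof: both apply Lemma~\ref{lemma2} to $\route$ and $\route'$ (with the same defect function $f$, since $f(\phi)$ depends only on $D$ and the inner angle), then combine $L(\route)\leq L(\route')$ with the sorted, termwise comparison $\sum_i f(\phi_i)\geq\sum_j f(\phi'_j)$ from the monotonicity of $f$. Your explicit remarks on the common denominator $\sizex{\Omega(A_0)}$ and on why the two closed-form expressions are comparable are just slightly more detailed bookkeeping of the same argument.
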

\begin{proof}
Due to Lemma \ref{lemma2}, $\int_{\route\cap D\intersect}\dD=2\lengthx{D}L(\route)+2\pi \sizex{D}-\sum_i f(\phi_i)$ and $\int_{\route'\cap D\intersect}\dD=2\lengthx{D}L(\route')+2\pi \sizex{D}-\sum_i f(\phi'_i)$.
Because $f(\phi)$ is a decreasing function of $\phi$, $\sum_i f(\phi_i)=\sum_k f(\phi_{i_k})\geq \sum_k f(\phi'_{j_k})=\sum_i f(\phi'_i)$.
Therefore, $\int_{\route\cap D\intersect}\dD \leq\int_{\route'\cap D\intersect}\dD$.
\end{proof}

The theorem above provides a better route.
The following is such an example.
\begin{example}
In Fig. \ref{zigzagEx}, $\route$ consists of $l_i$ for $i=1$ to 4 and is the route in red (thin line), and $\route'$ consists of $l'_i$ for $i=1$ to 4 and is the route in black (thick line).
In Fig. \ref{zigzagEx}-(a), $\phi_2=\phi'_2$, $\phi'_1=\phi'_3=\pi$, $L(l_1)=L(l'_1)$, $L(l_4)=L(l'_4)$, $L(l_2)=L(l'_3)$, and $L(l_3)=L(l'_2)$.
In Fig. \ref{zigzagEx}-(b), $\phi_2=\phi'_2$, $\phi'_1=\phi'_3=\pi$, $L(l_1)\leq L(l'_1)$, $L(l_2)=L(l'_2)$, $L(l_3)=L(l'_3)$, and $L(l_4)\leq L(l'_4)$.
It is clear that $L(\route)\leq L(\route')$, $\{\phi_i\}_i\leq \{\phi'_j\}_j$ for both cases.
If $\dia (D)$ satisfies the assumptions of this theorem, $\Pr(\route\cap D\intersect)\leq \Pr(\route'\cap D\intersect)$ for both cases.

\begin{figure}[htb] 
\begin{center} 
\includegraphics[width=8cm,clip]{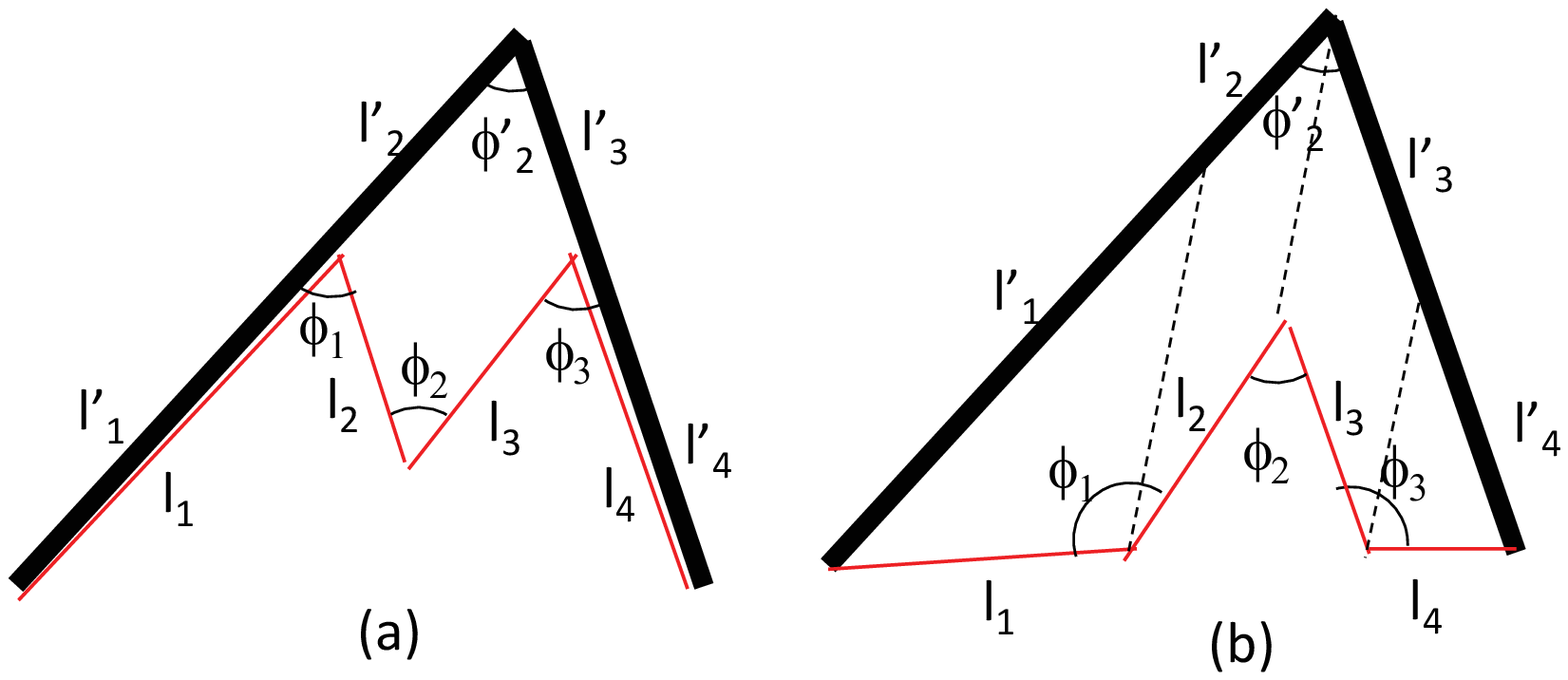} 
\caption{Example of $\Pr(\route\cap D\intersect)\leq \Pr(\route'\cap D\intersect)$} 
\label{zigzagEx} 
\end{center} 
\end{figure}

\end{example}

It is clear that $\Pr(\route\cap D\intersect)\leq \Pr(\bar{\route}\cap D\intersect)$, where $\bar{\route}$ is the convex hull of $\route$.
Because $\Pr(\route\cap D\intersect)=\Pr(\bar{\route}\cap D\intersect)$ in \cite{ToNsaito} assuming that $D$ is modeled by a half-plane, $\Pr(\bar{\route}\cap D\intersect)- \Pr(\route\cap D\intersect)$ affects finite $D$.

\subsubsection{Ring}
Let us consider a ring-type network.
\begin{theorem}\label{r_theo}
Assume that $\route_1$ and $\route_2$ form a ring-type network $\route_1\cup\route_2$, where $C(\route_1,\route_2)$ is convex.
If $\route_3, \route_4, \cdots\subset  C(\route_1,\route_2)$, $\Pr(\route_1\cap D\intersect, \route_2\cap D\intersect)\leq \Pr(\route_3\cap D\intersect, \route_4\cap D\intersect,\cdots)$.
\end{theorem}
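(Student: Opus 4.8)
The plan is to reduce the statement to a set inclusion in the parameter space and then invoke monotonicity of the geometric probability. Concretely, I would show that every position $(x_d,y_d,\theta)$ of $D$ for which $\route_1\cap D\intersect$ and $\route_2\cap D\intersect$ is also a position for which $\route_k\cap D\intersect$ for every $k\geq 3$; since the probability of a condition is the normalized size of the corresponding subset of $\Omega(A_0)$, the inclusion of these subsets immediately yields $\Pr(\route_1\cap D\intersect,\route_2\cap D\intersect)\leq\Pr(\route_3\cap D\intersect,\route_4\cap D\intersect,\cdots)$.

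The core is a purely geometric claim: if $D$ is convex, $C(\route_1,\route_2)$ is convex, and $\route_k\subset C(\route_1,\route_2)$ is a route between $s$ and $t$, then $D\cap\route_1\intersect$ and $D\cap\route_2\intersect$ together imply $D\cap\route_k\intersect$. To prove it I would pick $p_1\in D\cap\route_1$ and $p_2\in D\cap\route_2$; convexity of $D$ gives $\line(p_1,p_2)\subset D$, and convexity of $C(\route_1,\route_2)$ gives $\line(p_1,p_2)\subset \overline{C(\route_1,\route_2)}$ since $p_1,p_2$ lie on the boundary $\route_1\cup\route_2$. The route $\route_k$ together with $\route_1$ bounds $C(\route_1,\route_k)$ and together with $\route_2$ bounds $C(\route_k,\route_2)$; these two regions have union $C(\route_1,\route_2)$ and their closures meet exactly along $\route_k$ (using $s,t\in\route_k$). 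The segment $\line(p_1,p_2)$ is connected, lies in $\overline{C(\route_1,\route_k)}\cup\overline{C(\route_k,\route_2)}$, and contains $p_1\in\route_1\subset\overline{C(\route_1,\route_k)}$ and $p_2\in\route_2\subset\overline{C(\route_k,\route_2)}$; a connected set contained in the union of two closed sets and meeting both must meet their intersection, so $\emptyset\neq\line(p_1,p_2)\cap\route_k\subset D\cap\route_k$. Applying this claim to each $k\geq 3$ and intersecting the resulting events completes the argument.

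The step I expect to be the main obstacle is the separation property used above: that a simple route from $s$ to $t$ lying inside the topological disk $C(\route_1,\route_2)$ divides it into the two sub-regions $C(\route_1,\route_k)$ and $C(\route_k,\route_2)$ whose closures intersect precisely in that route. This is a Jordan-arc argument and relies on routes being simple curves and on $C(\route_1,\route_2)$ being simply connected, which follows from its convexity. I would also need to dispose of the degenerate case $p_1=p_2$, which forces $p_1\in\route_1\cap\route_2=\{s,t\}\subset\route_k$, and to note that if $\route_k$ happens to touch $\route_1$ or $\route_2$ away from $s,t$ the claim only becomes easier. Both convexity hypotheses are genuinely used and cannot be dropped: convexity of $D$ keeps the chord $\line(p_1,p_2)$ inside $D$, while convexity of $C(\route_1,\route_2)$ keeps it inside the ring region so that it is forced to cross $\route_k$.
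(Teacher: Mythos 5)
Your proposal is correct, and its skeleton coincides with the paper's: both reduce the claim to an inclusion of events in the parameter space, pick points $p_1\subset D\cap\route_1$, $p_2\subset D\cap\route_2$, and use convexity of $D$ to put the chord $\line(p_1,p_2)$ inside $D$ and convexity of $C(\route_1,\route_2)$ to keep it inside the ring region. The two arguments part ways only at the crossing step. The paper argues that the line through $p_1$ and $p_2$ separates $s$ and $t$ into opposite half-planes (because $p_1$ and $p_2$ sit on the two boundary arcs of a convex region), so any $s$--$t$ route inside the region must cross that line, and within the convex region the line is met only along the chord; hence every $\route_k$ meets $\line(p_1,p_2)\subset D$. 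You instead separate the region by $\route_k$ itself: a Jordan-arc decomposition of the disk $C(\route_1,\route_2)$ into $C(\route_1,\route_k)$ and $C(\route_k,\route_2)$ whose closures meet exactly along $\route_k$, plus a connectedness argument forcing the chord to hit $\route_k$. Both are sound, but they buy different things. The paper's line-separation step is more elementary (an intermediate-value crossing, no Jordan-arc machinery) and does not require $\route_k$ to be a simple curve, whereas your route leans on exactly the topological separation lemma you flag as the main obstacle and needs the additional routes to be simple arcs. In exchange, your version cleanly isolates where each convexity hypothesis enters (convexity of $D$ keeps the chord in $D$; convexity of $C(\route_1,\route_2)$ keeps it in the region), handles the degenerate case $p_1=p_2$ explicitly, and avoids having to argue about where the route's crossing of the line lands, a point the paper passes over quickly in degenerate configurations (e.g., when a boundary piece is collinear with the chord or $p_1\in\{s,t\}$).
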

\begin{proof}
If $\route_1\cap D\intersect, \route_2\cap D\intersect$, there exist point $u_i\subset D$ on $\route_i$ ($i=1,2$).
Due to the convexity of $D$ and $C(\route_1,\route_2)$, (i) any point on the line segment between $u_1$ and $u_2$ is in $D$, and (ii) one of the half planes made by the line passing $u_1$ and $u_2$ includes $s$ and the other half plane includes $t$.
Because of (ii), any route between $s$ and $t$ must intersect the line segment between $u_1$ and $u_2$.
Therefore, because of (i), any route includes a point in $D$.
This means that if $\{\route_1\cap D\intersect, \route_2\cap D\intersect\}$, then $\{\route_3,\route_4,\cdots\cap D\intersect\}$.
As a result, $\Pr(\route_1\cap D\intersect, \route_2\cap D\intersect)\leq \Pr(\route_3\cap D\intersect, \route_4\cap D\intersect,\cdots)$.

%\begin{figure}[htb] 
%\begin{center} 
%\includegraphics[width=8cm,clip]{high-connect} 
%\caption{Routes within $C(\route_1,\route_2)$} 
%\label{high-connect} 
%\end{center} 
%\end{figure}

\end{proof}

\begin{cor}
Assume that $\route_1$ and $\route_2$ form a ring-type network $\route_1\cup\route_2$, where $C(\route_1,\route_2)$ is convex.
\begin{eqnarray}\label{ineq_cor}
&&\Pr(s\subset D)+\Pr(t\subset D)-\Pr(l(s,t)\subset D)\cr
&\leq&\Pr(\route_1\cap D\intersect, \route_2\cap D\intersect)\cr
&\leq& \Pr(l(s,t)\cap D\intersect)\cr
&=&(2\pi\sizex{D}+2d(s,t)\lengthx{D})/\sizex{\Omega(A_0)}
\end{eqnarray}
\end{cor}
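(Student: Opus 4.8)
The plan is to prove the four-term chain in (\ref{ineq_cor}) one relation at a time, reusing Theorem \ref{r_theo}, the $n=1$ case of Lemma \ref{lemma2}, and a couple of elementary set inclusions that exploit the convexity of $D$.

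\emph{Upper bound and closed form.} Since $s$ and $t$ are the common endpoints of $\route_1$ and $\route_2$, they lie on the boundary $\route_1\cup\route_2$ of the convex region $C(\route_1,\route_2)$, so the segment $\line(s,t)$ is contained in $C(\route_1,\route_2)$. Applying Theorem \ref{r_theo} with the single extra route $\route_3=\line(s,t)$ (so the right-hand event reduces to $\{\route_3\cap D\intersect\}$) yields $\Pr(\route_1\cap D\intersect,\,\route_2\cap D\intersect)\leq\Pr(\line(s,t)\cap D\intersect)$. For the final equality I would invoke Lemma \ref{lemma2} with $n=1$ (equivalently, the remark just after its proof): taking $\route=\line(s,t)$ with $L(\route)=d(s,t)$, there is no $f(\phi_i)$ term, so $\int_{\line(s,t)\cap D\intersect}\dD=2\lengthx{D}d(s,t)+2\pi\sizex{D}$; dividing by $\sizex{\Omega(A_0)}$ gives the stated expression.

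\emph{Lower bound.} The key point is that $s$ and $t$ each lie on both $\route_1$ and $\route_2$, so each of the events $\{s\subset D\}$ and $\{t\subset D\}$ already forces $\{\route_1\cap D\intersect,\,\route_2\cap D\intersect\}$; hence $\{s\subset D\}\cup\{t\subset D\}\subseteq\{\route_1\cap D\intersect,\,\route_2\cap D\intersect\}$. By convexity of $D$, $s\subset D$ and $t\subset D$ hold together exactly when $\line(s,t)\subset D$, so $\{s\subset D\}\cap\{t\subset D\}=\{\line(s,t)\subset D\}$. Inclusion--exclusion applied to the union event then gives $\Pr(\{s\subset D\}\cup\{t\subset D\})=\Pr(s\subset D)+\Pr(t\subset D)-\Pr(\line(s,t)\subset D)$, which combined with the inclusion above is $\leq\Pr(\route_1\cap D\intersect,\,\route_2\cap D\intersect)$; this is the first line of (\ref{ineq_cor}).

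\emph{Where the care is needed.} None of the steps is long, but two points deserve attention: (a) one must check $\line(s,t)\subset C(\route_1,\route_2)$ so that Theorem \ref{r_theo} legitimately applies to $\route_3=\line(s,t)$; and (b) the convexity identity $\{s\subset D\}\cap\{t\subset D\}=\{\line(s,t)\subset D\}$ is exactly what upgrades the crude inclusion $\{s\subset D\}\cup\{t\subset D\}\subseteq\{\text{both routes meet }D\}$ into the sharp three-term lower bound rather than a weaker one. No angle hypothesis enters, since the $n=1$ instance of Lemma \ref{lemma2} carries no $f(\phi_i)$ term.
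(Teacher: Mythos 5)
Your proposal is correct and follows essentially the same route as the paper: the lower bound via $\{s\subset D\}\cup\{t\subset D\}$, inclusion--exclusion, and the convexity identity $\{s,t\subset D\}=\{\line(s,t)\subset D\}$; the upper bound via Theorem \ref{r_theo} applied to $\route_3=\line(s,t)\subset C(\route_1,\route_2)$; and the closed form from the kinematic measure of a segment (your $n=1$ case of Lemma \ref{lemma2} is, as the paper itself notes, the same as its use of Eq.~(\ref{size_two_convex}) with $\sizex{\line(s,t)}=0$ and $\lengthx{\line(s,t)}=2d(s,t)$). Your explicit check that $\line(s,t)\subset C(\route_1,\route_2)$ is a detail the paper leaves implicit, but it does not change the argument.
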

\begin{proof}
The first inequality in Eq. (\ref{ineq_cor}) is because $\Pr(\route_1\cap D\intersect, \route_2\cap D\intersect)\geq \Pr(\{s \subset D\}\cup\{t \subset D\})=\Pr(s \subset D)+\Pr(t \subset D)-\Pr(s,t\subset D)$.
Due to the convexity of $D$, the event $\{s,t\subset D\}$ is equivalent to the event $\{l(s,t)\subset D\}$.
The second inequality in Eq. (\ref{ineq_cor}) is due to Theorem \ref{r_theo}.
The equality in Eq. (\ref{ineq_cor}) is because $\Pr(l(s,t)\cap D\intersect)=\sizex{\Omega(l(s,t))}/\sizex{\Omega(A_0)}$ due to the convexity of $l(s,t)$ and Eq. (6.48) in \cite{Santalo}.
Apply Eq. (\ref{size_two_convex}) and note that $\lengthx{l(s,t)}=2d(s,t)$.
\end{proof}
$\Pr(l(s,t)\subset D)$ is approximately given by Theorem \ref{include_theorem} in the Appendix as well as by Eqs. (6.44) and (6.52) in \cite{Santalo}.
Because of $\Pr(s \subset D)=\Pr(t \subset D)=2\pi\sizex{D}/\sizex{\Omega(A_0)}$, we can evaluate the first line of Eq. (\ref{ineq_cor}).

If a route intersecting $D$ is disconnected due to a strong earthquake, the probability $\Pr(\route_1\cap D\intersect, \route_2\cap D\intersect)$ is identical to $\disconnect$ for the ring-type network.
Therefore, Theorem \ref{r_theo} and the Corollary above provide the following result.

\begin{cor}
Assume that a route intersecting $D$ is disconnected.
If $\route_1$ and $\route_2$ form a ring-type network $\route_1\cup\route_2$, where $C(\route_1,\route_2)$ is convex, additional routes $\route_3, \route_4, \cdots\subset  C(\route_1,\route_2)$ do not decrease $\disconnect$.
If $\route_1$ and $\route_2$ are replaced with $\route_3, \route_4\subset  C(\route_1,\route_2)$, $\disconnect$ becomes worse.
When $\route_3, \route_4$ become $l(s,t)$,  $\disconnect$ becomes largest and is given by the right-hand side of Eq. (\ref{ineq_cor}).
$\disconnect$ cannot be smaller than the left-hand side of Eq. (\ref{ineq_cor}).
\end{cor}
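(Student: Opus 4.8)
The plan is to read this corollary as a direct translation of Theorem \ref{r_theo} and the preceding Corollary into the language of $\disconnect$, once the failure assumption is used to pin down what disconnection means. Since a route intersecting $D$ is disconnected, $s$ and $t$ are disconnected exactly when \emph{every} $s$--$t$ route of the network intersects $D$. Hence for the bare ring $\route_1\cup\route_2$ we have $\disconnect=\Pr(\route_1\cap D\intersect,\ \route_2\cap D\intersect)$, and for any network whose route set is $\{\route_1,\route_2,\route_3,\dots\}$ we have $\disconnect=\Pr(\route_1\cap D\intersect,\ \route_2\cap D\intersect,\ \route_3\cap D\intersect,\dots)$.

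First I would dispose of the ``additional routes do not decrease $\disconnect$'' claim. Adding routes only intersects more events, so $\{\route_1\cap D\intersect,\dots,\route_k\cap D\intersect\}\subseteq\{\route_1\cap D\intersect,\ \route_2\cap D\intersect\}$; conversely, the event containment established in the proof of Theorem \ref{r_theo} (valid because $\route_3,\route_4,\dots\subset C(\route_1,\route_2)$) gives the reverse inclusion. The two disconnection events therefore coincide and $\disconnect$ is literally unchanged by the extra routes. For the ``replacement makes $\disconnect$ worse'' claim I would invoke Theorem \ref{r_theo} again, now with the pair $(\route_3,\route_4)\subset C(\route_1,\route_2)$ in the role of the extra routes: it yields $\Pr(\route_1\cap D\intersect,\ \route_2\cap D\intersect)\le\Pr(\route_3\cap D\intersect,\ \route_4\cap D\intersect)$, whose right-hand side is the disconnection probability of the ring $\route_3\cup\route_4$.

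For the extremal statement, observe that $C(\route_1,\route_2)$ is convex with $s$ and $t$ on its boundary, so the chord $\line(s,t)$ lies in $C(\route_1,\route_2)$ and is an admissible choice of replacement route; taking $\route_3=\route_4=\line(s,t)$ collapses the ring to the single straight route, so $\disconnect=\Pr(\line(s,t)\cap D\intersect)$. The second inequality of the preceding Corollary says this is the maximal value of $\disconnect$ over admissible rings, and its explicit form is obtained from $\sizex{\Omega(\line(s,t))}/\sizex{\Omega(A_0)}$ via Eq. (\ref{size_two_convex}) and $\lengthx{\line(s,t)}=2d(s,t)$, giving exactly the right-hand side of Eq. (\ref{ineq_cor}). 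The lower bound is just the first inequality of that Corollary: $\disconnect=\Pr(\route_1\cap D\intersect,\ \route_2\cap D\intersect)\ge\Pr(\{s\subset D\}\cup\{t\subset D\})=\Pr(s\subset D)+\Pr(t\subset D)-\Pr(\line(s,t)\subset D)$, where convexity of $D$ identifies $\{s,t\subset D\}$ with $\{\line(s,t)\subset D\}$.

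I do not expect a serious obstacle, since everything is assembled from results already proved; the one point deserving explicit care is the two-sided event equivalence ``$s,t$ are disconnected if and only if every $s$--$t$ route meets $D$'', and in particular the use of Theorem \ref{r_theo} to rule out that inserting interior routes opens up a new route that survives $D$. I would also be explicit that every replacement or added route must still connect $s$ to $t$ and lie inside $C(\route_1,\route_2)$, since that hypothesis is precisely what each application of Theorem \ref{r_theo} consumes.
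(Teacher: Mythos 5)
Your proposal is correct and matches the paper's approach: the paper likewise obtains this corollary directly from Theorem \ref{r_theo} and the preceding Corollary, using the assumption that a route intersecting $D$ is disconnected to identify $\disconnect$ with $\Pr(\route_1\cap D\intersect,\ \route_2\cap D\intersect)$. Your write-up merely makes explicit the event equivalence and the substitution of $\route_3,\route_4$ (and of $l(s,t)$) into Theorem \ref{r_theo}, which the paper leaves implicit.
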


The result that additional routes do not decrease $\disconnect$ under the assumption that routes intersecting $D$ are disconnected is consistent with the result in \cite{ToNsaito}, where $D$ is modeled by a half-plane.
However, the condition $\route_3, \route_4, \cdots\subset  C(\route_1,\route_2)$ is not needed in \cite{ToNsaito}.
On the other hand, the result that route $l(s,t)$ provides the worst $\disconnect$ is in clear contrast with the result in \cite{ToNsaito}.
When $D$ is modeled by a half-plane, the change in the physical routes of a ring-type network does not change $\disconnect$ \cite{ToNsaito}.

\subsection{Additive metrics}
In this subsection, the expected cost incurred is analyzed under the assumption that cost $w(u)$ is incurred when point $u$ on $\route$ is in $D$.

The assumption of convexity for $D$ is not required for Theorem \ref{add_theorem_single} and Corollaries \ref{add_cor_single}, \ref{add_cor_tree}, and \ref{N_d(s)}.

\begin{theorem}\label{add_theorem_single}
The expected cost incurred at the single $\route$ is given by the following.
\begin{equation}
E[\int_{u\subset \route,u\subset D}w(u)du]=2\pi \sizex{D}W_1/\sizex{\Omega(A_0)}\label{additive1},
\end{equation}
where $W_1\defeq \int_{u\subset \route}w(u)du$ is the sum of the costs along $\route$.

\end{theorem}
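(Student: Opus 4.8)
The plan is to reduce the statement to a single application of Fubini's theorem together with one elementary fact from integral geometry. Starting from the definition of expectation introduced in Section~III, I would first write
\begin{equation*}
E[\int_{u\subset \route,u\subset D}w(u)du]=\frac{1}{\sizex{\Omega(A_0)}}\int_{\Omega(A_0)}\Bigl(\int_{u\subset \route,\,u\subset D(x_d,y_d,\theta)}w(u)\,du\Bigr)\dD,
\end{equation*}
and then regard the inner--outer pair as one integral of the nonnegative function $w(u)$ over the set of quadruples $(u,x_d,y_d,\theta)$ with $u\subset \route$ and $u\subset D(x_d,y_d,\theta)$. The first move is to exchange the order of integration, so that $u$ ranges over $\route$ on the outside and, for each fixed $u$, the placement $(x_d,y_d,\theta)$ of $D$ ranges over exactly those positions for which $u\subset D$.

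The geometric core is the next step: for a fixed point $u\subset \route\subset A_0$ I would compute the kinematic measure $\int \dD$ of the set of placements of $D$ that cover $u$. For each fixed orientation $\theta$, the set of reference-point locations $(x_d,y_d)$ for which the rigidly placed body $D(x_d,y_d,\theta)$ contains $u$ is a translate of a fixed congruent copy of $D$ (reflected and rotated according to $\theta$), hence has area $\sizex{D}$, independent of $u$ and of $\theta$; integrating over $\theta$ on an interval of length $2\pi$ gives $2\pi\sizex{D}$. A small but essential observation is that the $\Omega(A_0)$ restriction leaves no trace after the swap: if $u\subset D$ and $u\subset \route\subset A_0$, then $D\cap A_0\intersect$ automatically, so the placements counted are precisely those of $\Omega(A_0)$.

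Substituting $2\pi\sizex{D}$ back — it may be pulled out of the $u$-integral since it does not depend on $u$ — leaves $2\pi\sizex{D}\int_{u\subset \route}w(u)\,du=2\pi\sizex{D}W_1$, and dividing by $\sizex{\Omega(A_0)}$ gives Eq.~(\ref{additive1}). Note that convexity of $D$ is never used, only translation-invariance of planar area and boundedness of $\route$ and $\Omega(A_0)$, which is why the theorem is stated for general $D$. I expect the only genuine obstacle to be the bookkeeping around the interchange of integrals — confirming that the combined domain is exactly $\{(u,x_d,y_d,\theta):u\subset\route,\ u\subset D(x_d,y_d,\theta)\}$ with the $\Omega(A_0)$ condition automatically implied, and that the nonnegativity (or integrability) of $w$ makes Fubini applicable — after which the one-line measure computation above finishes the argument.
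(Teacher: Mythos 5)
Your proposal is correct and follows essentially the same route as the paper's proof: interchange the order of integration between the placement $(x_d,y_d,\theta)$ and the point $u\subset\route$, then use the fact that the kinematic measure of placements of $D$ covering a fixed point is $\int_{u\subset D}\dD=2\pi\sizex{D}$. Your added justifications (the translate-of-$D$ area argument for that measure, and the observation that the $\Omega(A_0)$ restriction is automatically satisfied) are details the paper leaves implicit, but the argument is the same.
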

\begin{proof}
Note $E[\int_{u\subset \route, u\subset D}w(u)du]=\int_{u\subset D}(\int_{u\subset \route}w(u)du)\dD/\sizex{\Omega(A_0)}=\int_{u\subset \route}(\int_{u\subset D}\dD)w(u)du/\sizex{\Omega(A_0)}$.
Because $\int_{u\subset D}\dD=2\pi \sizex{D}$, $E[\int_{u\subset \route}w(u)du]=2\pi \sizex{D}\int_{u\subset \route}w(u)du/\sizex{\Omega(A_0)}$.
\end{proof}

This result shows that the expected cost is independent of the shape of $\route$ if $\int_{u\subset \route}w(u)du$ is fixed.
If $w(u)$ is constant, the expected cost is proportional to the route length.  
The extension to multiple routes is trivial.

\begin{cor}\label{add_cor_single}
The expected cost incurred at routes $\cup_i \route_i$ is given by the following.
\begin{equation}
E[\int_{u\subset \cup_i \route_i,u\subset D}w(u)du]=2\pi \sizex{D}W_1'/\sizex{\Omega(A_0)}\label{additive1},
\end{equation}
where $W_1'\defeq \int_{u\subset \cup_i \route_i}w(u)du$ is the sum of the costs along all routes $\route_1,\route_2,\cdots$.
\end{cor}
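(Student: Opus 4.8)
The plan is to reduce this to Theorem~\ref{add_theorem_single} by repeating its proof with the single route $\route$ replaced by the point set $\cup_i\route_i$. First I would write the expectation as an integral over the parameter space and exchange the order of integration exactly as in Theorem~\ref{add_theorem_single}:
\[
E\Big[\int_{u\subset\cup_i\route_i,\,u\subset D}w(u)\,du\Big]
=\frac{\int_{\Omega(A_0)}\big(\int_{u\subset\cup_i\route_i,\,u\subset D}w(u)\,du\big)\dD}{\sizex{\Omega(A_0)}}
=\frac{\int_{u\subset\cup_i\route_i}\big(\int_{u\subset D}\dD\big)w(u)\,du}{\sizex{\Omega(A_0)}}.
\]
The interchange is legitimate because the integrand is nonnegative and the domain $\Omega(A_0)$ is bounded, so Fubini's theorem applies.

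Next I would use the fact, already invoked in the proof of Theorem~\ref{add_theorem_single}, that for any fixed point $u$ the measure of placements $(x_d,y_d,\theta)$ of $D$ that cover $u$ equals $2\pi\sizex{D}$; this value does not depend on $u$, and in particular it does not depend on which route or routes pass through $u$. Pulling the constant $2\pi\sizex{D}$ out of the $du$-integral then gives $2\pi\sizex{D}\int_{u\subset\cup_i\route_i}w(u)\,du/\sizex{\Omega(A_0)}=2\pi\sizex{D}W_1'/\sizex{\Omega(A_0)}$ by the definition of $W_1'$.

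I do not expect any real obstacle. The one point that deserves care is that $\cup_i\route_i$ must be read as a single point set, so that points lying on several of the $\route_i$ (for instance the shared endpoints $s$ and $t$ of the two routes of a ring, or any overlapping sub-segments) are counted once; this is exactly the convention built into the definition $W_1'\defeq\int_{u\subset\cup_i\route_i}w(u)\,du$, and with that reading the displayed computation is word for word the one in Theorem~\ref{add_theorem_single}. Alternatively, if the $\route_i$ meet only on a set of one-dimensional measure zero, I could instead split $\int_{u\subset\cup_i\route_i}=\sum_i\int_{u\subset\route_i}$, apply Theorem~\ref{add_theorem_single} to each $\route_i$, and add, using additivity of the total cost over the $\route_i$; both approaches yield the same formula. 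As in the theorem, convexity of $D$ is not used anywhere, which matches the remark preceding Theorem~\ref{add_theorem_single}.
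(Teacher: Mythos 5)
Your proof is correct and matches the paper's treatment: the paper offers no separate argument, dismissing the corollary with ``The extension to multiple routes is trivial,'' i.e.\ exactly the repetition of the proof of Theorem~\ref{add_theorem_single} with $\route$ replaced by the point set $\cup_i\route_i$ that you carry out. Your remark about counting shared points once (or, equivalently, summing over routes meeting in a null set) is a fair clarification but adds nothing beyond the paper's intent.
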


\subsubsection{Single route}
When there is a single $\route$ between $s$ and $t$, the following corollary can be provided.

\begin{cor}\label{add_cor_tree}
Assume that (i) a failure probability of the $i$-th node on $\route$ is $\alpha(i)$ and (ii) a link failure rate at $u$ is $\beta(u)$ per unit length, and (iii) the mean number $\bar\gamma$ of route failures defined by $\bar\gamma\defeq\sum_{i}\alpha(i)+L(\route) \bar\beta$ is much smaller than 1, where $\bar\beta\defeq \int_{u\subset \route}\beta(u)du/L(\route)$ is the mean link failure rate.
When there is a single $\route$ between $s$ and $t$, 
\begin{equation}
\disconnect=2\pi \sizex{D}\bar\gamma/\sizex{\Omega(A_0)}.\label{add_cor_tree_eq}
\end{equation}
\end{cor}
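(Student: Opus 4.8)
The plan is to obtain this as a consequence of Theorem~\ref{add_theorem_single} together with a first-order (rare-event) expansion justified by the hypothesis $\bar\gamma\ll 1$. Since there is a single $\route$ between $s$ and $t$, the pair is disconnected exactly when at least one component of $\route$ inside $D$ fails, because components outside $D$ never fail. First I would condition on the position $(x_d,y_d,\theta)$ of $D$. Under the model implicit in the hypotheses (given the disaster position, the node failures are independent Bernoulli$(\alpha(i))$ events and the link failures along $\route\cap D$ form a Poisson process of intensity $\beta(u)$), the conditional survival probability of $\route$ is $\prod_{i:\,u_i\subset D}(1-\alpha(i))\,\exp\bigl(-\int_{u\subset\route,\,u\subset D}\beta(u)\,du\bigr)$, so the conditional disconnection probability $q(x_d,y_d,\theta)$ is one minus this product.

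Next I would sandwich $q$ between the conditional expected number of failed components in $D$ and that quantity minus a quadratic remainder. Writing $\Lambda\defeq\int_{u\subset\route,\,u\subset D}\beta(u)\,du$ and using $\prod(1-\alpha(i))\geq 1-\sum\alpha(i)$ and $e^{-\Lambda}\geq 1-\Lambda$, one gets $q\leq\sum_{i:\,u_i\subset D}\alpha(i)+\Lambda$ (a union bound); using $\prod(1-\alpha(i))e^{-\Lambda}\leq e^{-(\sum\alpha(i)+\Lambda)}$ and $1-e^{-x}\geq x-x^2/2$, one gets $q\geq\sum_{i:\,u_i\subset D}\alpha(i)+\Lambda-\frac{1}{2}\bar\gamma^2$, since the failure parameter restricted to $D$ is at most $\bar\gamma$. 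Taking the expectation over the uniformly distributed disaster position, $\disconnect=E\bigl[\sum_{i:\,u_i\subset D}\alpha(i)\bigr]+E[\Lambda]+O(\bar\gamma^2)$.

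It then remains to evaluate the two expectations. For each node $u_i$ on $\route$, $\Pr(u_i\subset D)=\int_{u_i\subset D}\dD/\sizex{\Omega(A_0)}=2\pi\sizex{D}/\sizex{\Omega(A_0)}$, using the identity $\int_{u\subset D}\dD=2\pi\sizex{D}$ already employed in the proof of Theorem~\ref{add_theorem_single}; hence $E\bigl[\sum_i\alpha(i)\bfone(u_i\subset D)\bigr]=2\pi\sizex{D}\sum_i\alpha(i)/\sizex{\Omega(A_0)}$. For the link term I would apply Theorem~\ref{add_theorem_single} with $w(u)\defeq\beta(u)$, giving $E[\Lambda]=2\pi\sizex{D}\int_{u\subset\route}\beta(u)\,du/\sizex{\Omega(A_0)}=2\pi\sizex{D}L(\route)\bar\beta/\sizex{\Omega(A_0)}$. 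Adding the two and using $\bar\gamma=\sum_i\alpha(i)+L(\route)\bar\beta$ gives Eq.~(\ref{add_cor_tree_eq}).

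The main obstacle, and the only content beyond bookkeeping, is the sandwich step: the equality in Eq.~(\ref{add_cor_tree_eq}) is not exact but holds up to an $O(\bar\gamma^2)$ error, so the proof has to state in what sense the rare-event approximation is being made and why $\bar\gamma\ll 1$ makes the neglected quadratic term negligible. A secondary point of care is that the union and Bonferroni bounds must be applied conditionally on the disaster position and only to components lying inside $D$, so that it is the node-in-$D$ probability and Theorem~\ref{add_theorem_single} --- not the full node set and route length --- that produce the constants in the final formula.
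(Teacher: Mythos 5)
Your proposal is correct and follows essentially the same route as the paper: treat $\bar\gamma\ll 1$ as a rare-event condition so that, conditionally on the position of $D$, the disconnection probability reduces (to first order) to the total failure weight $\int_{u\subset\route,\,u\subset D}w(u)\,du$ accumulated on $\route\cap D$, and then take the expectation via Theorem~\ref{add_theorem_single} and the identity $\int_{u\subset D}\dD=2\pi\sizex{D}$. The only differences are cosmetic refinements --- you split the node atoms $\alpha(i)$ from the link density $\beta(u)$ and make the $O(\bar\gamma^2)$ error of the ``ignore multiple failures'' step explicit via a sandwich bound, whereas the paper folds both into a single weight $w(u)\,du$ and asserts the approximation without quantifying it.
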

\begin{proof}
Assume that $\int_{u\subset \route}w(u)du\ll 1$, and each network component on $\route$ in $D$ is independently in failure.
Here, $w(u)du$ means the failure probability of $[u,u+du)$ when $[u,u+du)$ is in $D$.
Because the failure probability $w(u)du$ is very small and we can ignore multiple failures on $\route$, a single failure may occur with probability $\int_{u\subset \route, u\subset D}w(u)du$ for a fixed $D$.
Therefore, $\disconnect=E[\int_{u\subset \route,u\subset D}w(u)du]$.
Because of $W_1=\bar\gamma$ and Theorem \ref{add_theorem_single}, $\disconnect$ is given by Eq. (\ref{add_cor_tree_eq}).
\end{proof}

The result in \cite{ToNsaito} shows that $\disconnect$ can be improved by reducing the perimeter length of the convex hull of $\route$.
However, Corollary \ref{add_cor_tree} shows that reducing the route length is required to improve $\disconnect$ when $D$ is bounded.

\subsubsection{Ring}
For a ring-type network, the following theorem provides $\disconnect$.
The key observation is that the probability of more than two failures are negligibly small if $\sum_i\alpha(i)+\int_{u\subset \route_1\cup\route_2}\beta(u)du\ll 1$.
\begin{theorem}\label{ring_theorem}
Assume that $\route_1(s,t)$ and $\route_2(s,t)$ between $s$ and $t$ form a ring-type network $\route_1\cup\route_2$.
When $\sum_i\alpha(i)+\int_{u\subset \route_1\cup\route_2}\beta(u)du\ll 1$, 
\begin{eqnarray}
&&\disconnect\cr
&=&\alpha(s)\Pr(s\subset D)+\alpha(t)\Pr(t\subset D)\cr
&&-\alpha(s)\alpha(t)\Pr(l(s,t)\subset D)\cr
&&+\sum_{i(\neq s,t)\subset\route_1,j(\neq s,t)\subset\route_2}\alpha(i)\alpha(j)\Pr(l(i,j)\subset D)\cr
&&+\sum_{k=1}^2\sum_{v(\neq s,t)\subset\route_k}\alpha(v)\sum_{\Route(i)\subset \route_j(\neq \route_k)} W_i(v)\cr
&&+\sum_{\Route(i)\subset\route_1,\Route(j)\subset\route_2}W_{i,j}\label{ring_theorem_eq},
\end{eqnarray}
where $ W_i(v)\defeq\int_{u\subset \Route(i)}\beta(u)\Pr(l(u,v)\subset D)du$, $W_{i,j}\defeq\int_{u\subset \Route(i), v\subset \Route(j)}\beta(u)\beta(u)\Pr(l(u,v)\subset D)du\,dv$, and $\Route(i)$ means the part between node-$i$ and node-$(i+1)$ on $\route_1\cup\route_2$.
\end{theorem}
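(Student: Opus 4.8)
The plan is to condition on the random position of the disaster area $D$, write the disconnection event for the ring $\route_1\cup\route_2$ through inclusion--exclusion, and then retain only the terms that survive to second order in the small failure parameters $\{\alpha(i)\}$ and $\beta(\cdot)$. First I would characterize the disconnection event: since $\route_1\cap\route_2=\{s,t\}$, the pair $s,t$ is disconnected exactly when node $s$ fails, or node $t$ fails, or at least one interior component (a node $i\neq s,t$, or an infinitesimal link element $[u,u+du)$) of $\route_1$ fails \emph{and} at least one interior component of $\route_2$ fails. Writing $A_s$ and $A_t$ for the first two events and $B_k$ ($k=1,2$) for ``some interior component of $\route_k$ fails'', the disconnection event is $A_s\cup A_t\cup(B_1\cap B_2)$. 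After conditioning on the position of $D$, whether a given component lies in $D$ is deterministic; a node $i$ in $D$ then fails with probability $\alpha(i)$, a link element $[u,u+du)$ in $D$ fails with probability $\beta(u)\,du$, and all component failures are mutually independent --- in particular the interior components of $\route_1$ and those of $\route_2$ form disjoint, independent families.

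Next I would expand by inclusion--exclusion,
\begin{eqnarray*}
\disconnect&=&\Pr(A_s)+\Pr(A_t)+\Pr(B_1\cap B_2)-\Pr(A_s\cap A_t)\\
&&-\Pr(A_s\cap B_1\cap B_2)-\Pr(A_t\cap B_1\cap B_2)\\
&&+\Pr(A_s\cap A_t\cap B_1\cap B_2),
\end{eqnarray*}
and discard the last three terms, each of which forces at least three distinct components to fail and is hence $O(\epsilon^3)$ with $\epsilon\defeq\sum_i\alpha(i)+\int_{u\subset\route_1\cup\route_2}\beta(u)\,du\ll1$. Conditioning on $D$ gives $\Pr(A_s)=\alpha(s)\Pr(s\subset D)$ and $\Pr(A_t)=\alpha(t)\Pr(t\subset D)$, which are the first two terms of Eq. (\ref{ring_theorem_eq}). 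For $\Pr(A_s\cap A_t)$ both $s$ and $t$ must lie in $D$; by convexity of $D$ this is exactly the event $\{l(s,t)\subset D\}$, after which the two nodes fail independently, so $\Pr(A_s\cap A_t)=\alpha(s)\alpha(t)\Pr(l(s,t)\subset D)$, giving the third term.

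The substantive step is $\Pr(B_1\cap B_2)$. Conditioned on the position of $D$, cross-route independence gives $\Pr(B_1\cap B_2\mid D)=\Pr(B_1\mid D)\Pr(B_2\mid D)$, and to leading order $\Pr(B_k\mid D)$ equals the expected number of interior failures on $\route_k$, namely $\sum_{i(\neq s,t)\subset\route_k}\alpha(i)\mathbf{1}_{i\in D}+\int_{u\subset\route_k}\beta(u)\mathbf{1}_{u\in D}\,du$; the ``two-or-more failures on $\route_k$'' correction is $O(\epsilon^2)$ along that route, hence $O(\epsilon^3)$ in the product, and is dropped. I would then multiply the two leading-order expressions, take the expectation over the position of $D$, and use $E[\mathbf{1}_{u\in D}\mathbf{1}_{v\in D}]=\Pr(\{u\in D\}\cap\{v\in D\})=\Pr(l(u,v)\subset D)$ (convexity of $D$ again). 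The product splits into four bilinear pieces: a node--node piece $\sum_{i(\neq s,t)\subset\route_1,\,j(\neq s,t)\subset\route_2}\alpha(i)\alpha(j)\Pr(l(i,j)\subset D)$; a node--link piece and a link--node piece, which combine --- after grouping the link integral by the inter-node segments $\Route(i)$ --- into $\sum_{k=1}^2\sum_{v(\neq s,t)\subset\route_k}\alpha(v)\sum_{\Route(i)\subset\route_j(\neq\route_k)}W_i(v)$; and a link--link piece $\sum_{\Route(i)\subset\route_1,\Route(j)\subset\route_2}W_{i,j}$. Adding these to the terms already obtained yields Eq. (\ref{ring_theorem_eq}).

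The hard part will be the bookkeeping in this last step: one must argue cleanly that (i) replacing ``at least one interior failure on $\route_k$'' by the \emph{expected number} of interior failures on $\route_k$ costs only an $O(\epsilon^3)$ error once the two routes are multiplied together, (ii) conditioning on the position of $D$ makes $B_1$ and $B_2$ independent while turning coverage into the deterministic indicators $\mathbf{1}_{\cdot\in D}$ whose expectations are the coverage probabilities, and (iii) the resulting double sum/integral regroups precisely into the node--node, node--link, and link--link blocks of the statement, with each pairwise coverage event rewritten as $\Pr(l(\cdot,\cdot)\subset D)$ by convexity of $D$. By comparison, the inclusion--exclusion truncation and the single-node terms are routine.
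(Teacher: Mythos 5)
Your proposal is correct and follows essentially the same route as the paper: decompose the disconnection event into end-node failures ($P_1$) plus simultaneous interior failures on $\route_1$ and $\route_2$ ($P_2$), drop the cross term as higher order, use convexity of $D$ to rewrite $\{u,v\subset D\}$ as $\{l(u,v)\subset D\}$, and expand the bilinear product into node--node, node--link, and link--link pieces regrouped by the segments $\Route(i)$ into the $W_i(v)$ and $W_{i,j}$ terms. Your version merely makes the paper's ``ignore multiple failures'' step explicit via inclusion--exclusion and an $O(\epsilon^3)$ error bound, which is a welcome but not substantively different refinement.
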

\begin{proof}
When $\route_1(s,t)$ and $\route_2(s,t)$ form a ring-type network, $s$ and $t$ are disconnected if the event in which $s$ or $t$ is in failure occurs or if the event $u_1(\neq s,t)$ on $\route_1(s,t)$ in $D$ and $u_2(\neq s,t)$ on $\route_2(s,t)$ in $D$ are in failure occurs.
Let $P_1$ and $P_2$ be the probability of the occurrence of the former event and the latter event, respectively.
Because the probability that both events occur is much smaller than the probabilities that one of these event occurs, we can ignore the probability that both events occur.
In addition, due to the convexity of $D$, the event $\{u_1,u_2\subset D\}$ is equivalent to the event $\{l(u_1,u_2)\subset D\}$.
Therefore, $\disconnect=P_1+P_2$ and $P_1, P_2$ are given below. 
\begin{eqnarray*}
P_1&=&\alpha(s)\Pr(s\subset D)+\alpha(t)\Pr(t\subset D)\cr
&&-\alpha(s)\alpha(t)\Pr(l(s,t)\subset D)
\end{eqnarray*}
\begin{eqnarray*}
P_2=\int_{u\subset \route_1(s,t), v\subset \route_2(s,t)}\gamma(u)\gamma(v)\Pr(l(u,v)\subset D)du\,dv
\end{eqnarray*}
where $\gamma(x)\defeq\sum_{i(\neq s,t)}\alpha(i)\bfone(i=x)+\beta(x)$, which is a failure rate at $x$.
Therefore, 
\begin{eqnarray*}
&&P_2\cr
&=&\sum_{i(\neq s,t)\subset\route_1,j(\neq s,t)\subset\route_2}\alpha(i)\alpha(j)\Pr(l(i,j)\subset D)\cr
&&+\sum_{k=1}^2\sum_{i(\neq s,t)\subset\route_k}\alpha(i)\int_{u\subset \route_j(\neq \route_k)}\beta(u)\Pr(l(u,i)\subset D)du\cr
&&+\int_{u\subset \route_1(s,t), v\subset \route_2(s,t)}\beta(u)\beta(u)\Pr(l(u,v)\subset D)du\,dv.
\end{eqnarray*}
Because $\int_{u\subset \route_k}\beta(u)\Pr(l(u,i)\subset D)du=\sum_{\Route(j)\subset \route_k} W_j(i)$ and $\int_{u\subset \route_1(s,t), v\subset \route_2(s,t)}\beta(u)\beta(u)\Pr(l(u,v)\subset D)du\,dv=\sum_{\Route(i)\subset\route_1,\Route(j)\subset\route_2}W_{i,j}$, Theorem \ref{ring_theorem} is derived.
\end{proof}

When $\beta(u)=\beta_j$ for $u\subset\Route(j)$ for all $j$,
\begin{eqnarray*}
W_{i,j}&=&\beta_i\beta_j\int_{u\subset \Route(i), v\subset \Route(j)}\Pr(l(u,v)\subset D)du\,dv.
\end{eqnarray*}
Because
\begin{eqnarray*}
&&\int_{u\subset \Route(i), v\subset \Route(j)}\Pr(l(u,v)\subset D)du\,dv\cr
&\approx& \frac{1}{2}\int_{u\subset \Route(i), v\subset \Route(j)}(\Pr(l(u,j)\subset D)\cr
&&\qquad\qquad+\Pr(l(u,j+1)\subset D))du\,dv\cr
&=&\frac{1}{2}L(\Route(j))\int_{u\subset \Route(i)}(\Pr(l(u,j)\subset D)\cr
&&\qquad\qquad+\Pr(l(u,j+1)\subset D))du\cr
&\approx&\frac{1}{4}L(\Route(j))\int_{u\subset \Route(i)}\sum_{k_1=i}^{i+1}\sum_{k_2=j}^{j+1}\Pr(l(k_1,k_2)\subset D)du\cr
&=&\frac{1}{4}L(\Route(i))L(\Route(j))\sum_{k_1=i}^{i+1}\sum_{k_2=j}^{j+1}\Pr(l(k_1,k_2)\subset D),
\end{eqnarray*}
$W_{i,j}\approx\sum_{k_1=i}^{i+1}\sum_{k_2=j}^{j+1}\frac{\beta_i\beta_j}{4}L(\Route(i))L(\Route(j))\Pr(l(k_1,k_2)\subset D)$.
Similarly, $ W_i(v)\approx\beta_i L(R(i))(\Pr(l(v,i)\subset D)+\Pr(l(v,i+1)\subset D))/2$.
By applying these approximations to Eq. (\ref{ring_theorem_eq}), the following corollary is obtained.

\begin{cor}\label{ring_cor}
Assume that $\route_1(s,t)$ and $\route_2(s,t)$ between $s$ and $t$ form a ring-type network.
When $\sum_{i\subset \route_1\cup\route_2}\alpha(i)+\bar\beta L(\route_1\cup\route_2)\ll 1$, and $\beta(u)=\beta_j$ for $u\subset\Route(j)$,
\begin{eqnarray}
&&\disconnect\cr
&\approx&2\pi\sizex{D}(\alpha(s)+\alpha(t))/\sizex{\Omega(A_0)}\cr
&&-\alpha(s)\alpha(t)\Pr(l(s,t)\subset D)\cr
&&+\sum_{i(\neq s,t)\subset\route_1,j(\neq s,t)\subset\route_2}\alpha(i)\alpha(j)\Pr(l(i,j)\subset D)\cr
&&+\sum_{k=1}^2\sum_{v(\neq s,t)\subset\route_k}\alpha(v)\sum_{\Route(i)\subset \route_j(\neq\route_k)}\beta_i L(R(i))\cr
&&\qquad\qquad(\Pr(l(v,i)\subset D)+\Pr(l(v,i+1)\subset D))/2\cr
&&+\sum_{\Route(i)\subset\route_1,\Route(j)\subset\route_2}\sum_{k_1=i}^{i+1}\sum_{k_2=j}^{j+1}\frac{\beta_i\beta_j}{4}L(\Route(i))L(\Route(j))\cr
&&\qquad\qquad\qquad\qquad\qquad\qquad
\Pr(l(k_1,k_2)\subset D).\label{ring_cor_eq}
\end{eqnarray}
\end{cor}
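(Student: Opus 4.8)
The plan is to obtain Corollary~\ref{ring_cor} from Theorem~\ref{ring_theorem} by two operations: substituting the closed form of $\Pr(s\subset D)$ into the leading line of Eq.~(\ref{ring_theorem_eq}), and replacing $W_i(v)$ and $W_{i,j}$ by the endpoint-averaged approximations derived in the discussion immediately preceding the statement, which is where the piecewise-constant assumption $\beta(u)=\beta_j$ on $\Route(j)$ enters. No new geometric input is needed: the hypothesis $\sum_{i\subset\route_1\cup\route_2}\alpha(i)+\bar\beta L(\route_1\cup\route_2)\ll 1$ is exactly the hypothesis of Theorem~\ref{ring_theorem} (since $\int_{u\subset\route_1\cup\route_2}\beta(u)du=\bar\beta L(\route_1\cup\route_2)$ by definition of $\bar\beta$), so Eq.~(\ref{ring_theorem_eq}) is available verbatim.

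First I would simplify the node-at-endpoint terms. Because $s$ and $t$ are single points, for each fixed $\theta$ the set of $(x_d,y_d)$ with $s\subset D$ is a (reflected) translate of $D$, so $\int_{u\subset D}\dD=2\pi\sizex{D}$ and hence $\Pr(s\subset D)=\Pr(t\subset D)=2\pi\sizex{D}/\sizex{\Omega(A_0)}$ (this is the identity already used after the first Corollary and in the proof of Theorem~\ref{add_theorem_single}). Thus $\alpha(s)\Pr(s\subset D)+\alpha(t)\Pr(t\subset D)=2\pi\sizex{D}(\alpha(s)+\alpha(t))/\sizex{\Omega(A_0)}$, the first line of Eq.~(\ref{ring_cor_eq}); the terms $-\alpha(s)\alpha(t)\Pr(l(s,t)\subset D)$ and $\sum_{i(\neq s,t)\subset\route_1,\,j(\neq s,t)\subset\route_2}\alpha(i)\alpha(j)\Pr(l(i,j)\subset D)$ are carried over from Eq.~(\ref{ring_theorem_eq}) without change.

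Next I would substitute the approximations into the mixed link/node term and the link/link term. For $u$ ranging over the segment $\Route(i)$ from node $i$ to node $i+1$, the function $u\mapsto\Pr(l(u,v)\subset D)$ is replaced by the mean of its two endpoint values, giving $W_i(v)\approx\beta_i L(\Route(i))\bigl(\Pr(l(v,i)\subset D)+\Pr(l(v,i+1)\subset D)\bigr)/2$; applying the same device once in the $v$-variable over $\Route(j)$ and once in the $u$-variable over $\Route(i)$ yields $W_{i,j}\approx\sum_{k_1=i}^{i+1}\sum_{k_2=j}^{j+1}\frac{\beta_i\beta_j}{4}L(\Route(i))L(\Route(j))\Pr(l(k_1,k_2)\subset D)$ --- precisely the estimates computed between the two statements. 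Plugging these into $\sum_{k=1}^2\sum_{v(\neq s,t)\subset\route_k}\alpha(v)\sum_{\Route(i)\subset\route_j(\neq\route_k)}W_i(v)$ and into $\sum_{\Route(i)\subset\route_1,\,\Route(j)\subset\route_2}W_{i,j}$ produces the last two lines of Eq.~(\ref{ring_cor_eq}), which completes the derivation.

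I expect the only genuine obstacle to be making the endpoint-averaging step rigorous: the error it introduces is of second order in $L(\Route(i))$ and $L(\Route(j))$ and is governed by how sharply $u\mapsto\Pr(l(u,v)\subset D)$ bends, hence by $\sizex{D}$, $\lengthx{D}$, and $\dia(D)$ relative to the segment lengths. Everything else is bookkeeping, together with the smallness assumption $\ll 1$ already invoked in Theorem~\ref{ring_theorem} to discard higher-order simultaneous-failure events. A fully quantitative version would require a modulus-of-continuity (or second-derivative) bound for that probability as a function of one endpoint; for the stated corollary it is enough to regard each $\Route(i)$ as short compared with the length scale on which $\Pr(l(\cdot,\cdot)\subset D)$ varies.
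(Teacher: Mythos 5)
Your proposal matches the paper's own derivation: the paper obtains Corollary~\ref{ring_cor} exactly by substituting $\Pr(s\subset D)=\Pr(t\subset D)=2\pi\sizex{D}/\sizex{\Omega(A_0)}$ and applying the endpoint-averaged approximations of $W_i(v)$ and $W_{i,j}$ (derived in the text just before the corollary, under $\beta(u)=\beta_j$ on $\Route(j)$) to Eq.~(\ref{ring_theorem_eq}). Your additional remarks on the second-order error of the endpoint averaging go slightly beyond what the paper states but do not change the argument.
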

Because $\Pr(l(u,v)\subset D)$ can be evaluated using Theorem \ref{include_theorem} in the Appendix as well as with Eqs. (6.44) and (6.52) in \cite{Santalo}, the equation above can be evaluated.

%\cr
%&=&\sum_{R(i)\subset \route(s\rightarrow t_1), R(j)\subset \route(s\rightarrow t_2)}W_{i,j}\cr
%&&+\sum_{R(i)\subset \route(s\rightarrow t_1), R(j)\subset \route(t_1\rightarrow t_2)}W_{i,j}\cr
%&&+\sum_{R(i)\subset \route(s\rightarrow t_2), R(j)\subset \route(t_2\rightarrow t_1)}W_{i,j}
%\end{eqnarray}

%Let $G_s$ be a set of source nodes which are on $\route_1(s,t)\cup\route_2(s,t)$, let $G_t$ be a set of termination nodes which are on $\route_1(s,t)\cup\route_2(s,t)$, and $N_d(G_s \not \leftrightarrow G_t)$ be the expected number of disconnected pairs between these two sets.
%Because $N_d(G_s \not \leftrightarrow G_t)=\sum_{s'\subset G_s, t'\subset G_t} E[\bfone(s'\not \leftrightarrow t')]=\sum_{s'\subset G_s, t'\subset G_t}\Pr(s'\not \leftrightarrow t')$,
%\begin{cor}\label{N_d(G)}
%Assume that $\route_1(s,t)$ and $\route_2(s,t)$ between $s$ and $t$ form a ring-type network.
%When $w(u)$ means a failure probability when $u$ is in $D$, $w(u)\ll 1$ for any $u$, and $\int_{u_i\subset \route_i}w(u_i)du_i<1$ $(i=1,2)$, 
%\begin{eqnarray}
%N_d(G_s \not \leftrightarrow G_t)
%&=&\sum_{s'\subset G_s, t'\subset G_t}\Pr(s' \not \leftrightarrow t').
%\end{eqnarray}
%\end{cor}

\subsection{Optimization}
The results mentioned above provide network optimization.
First, define a metric. The following is an example.
\begin{cor}\label{N_d(s)}
Assume that the mean number $\bar\gamma_i$ of route failures for $\route_j$ satisfies $\bar\gamma_i\ll 1$.
When there is a single route $\route_j$ between $s$ and $t_j$ ($j=1,2\cdots$), the mean number $N_d(s)$ of nodes $t_j$ disconnecting node $s$ is given by 
\begin{equation}
N_d(s)=2\pi \sizex{D}\sum_j\bar\gamma_j/\sizex{\Omega(A_0)}.
\end{equation}
\end{cor}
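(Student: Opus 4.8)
The plan is to reduce the statement to Corollary \ref{add_cor_tree} applied separately to each of the single routes $\route_j$ between $s$ and $t_j$, and then sum over $j$. First I would introduce the indicator variable: for each destination $t_j$, let $X_j$ be the indicator that $s$ and $t_j$ are disconnected, so that $N_d(s) = \sum_j X_j$ is the number of nodes $t_j$ disconnected from $s$. By linearity of expectation, $N_d(s) = E[\sum_j X_j] = \sum_j E[X_j] = \sum_j \Pr(s \not\leftrightarrow t_j)$, where $\Pr(s\not\leftrightarrow t_j)$ is the disconnection probability for the pair $(s,t_j)$. Linearity of expectation holds here with no independence assumption, which is the only subtlety worth flagging.

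Next I would invoke Corollary \ref{add_cor_tree} for each $j$: since there is a single route $\route_j$ between $s$ and $t_j$, and the mean number of route failures $\bar\gamma_j \ll 1$, that corollary gives $\Pr(s\not\leftrightarrow t_j) = 2\pi\sizex{D}\bar\gamma_j/\sizex{\Omega(A_0)}$. Substituting this into the sum yields
\[
N_d(s) = \sum_j \frac{2\pi\sizex{D}\,\bar\gamma_j}{\sizex{\Omega(A_0)}} = \frac{2\pi\sizex{D}}{\sizex{\Omega(A_0)}}\sum_j \bar\gamma_j,
\]
since $\sizex{D}$ and $\sizex{\Omega(A_0)}$ do not depend on $j$. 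This is exactly the claimed formula.

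I do not expect a genuine obstacle here, as the result is essentially a bookkeeping consequence of the earlier corollary; the one point requiring a word of care is that $N_d(s)$ counts disconnected destinations and its expectation decomposes additively even though the disconnection events $\{s\not\leftrightarrow t_j\}$ for different $j$ may be correlated (for instance, when the routes $\route_j$ share a neighborhood of $s$, a single disaster near $s$ can disconnect many of them at once). Linearity of expectation is insensitive to this, so the sum goes through unchanged. One should also note, consistent with the hypotheses of Corollary \ref{add_cor_tree}, that convexity of $D$ is not needed for this corollary either.
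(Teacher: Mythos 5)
Your proposal is correct and matches the paper's own argument: the paper proves this corollary by exactly the same route, writing $N_d(s)=\sum_j E[\mathbf{1}(s\not\leftrightarrow t_j)]=\sum_j \Pr(s\not\leftrightarrow t_j)$ and then applying Corollary \ref{add_cor_tree} to each single route $\route_j$. Your remarks on linearity of expectation despite correlated disconnection events, and on convexity of $D$ not being needed, are consistent with the paper's treatment.
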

This is because $N_d(s)=\sum_j E[\bfone(s\not \leftrightarrow t_j)]=\sum_j \Pr(s\not \leftrightarrow t_j)$.

Assume that any communication needs to visit a server $S_v$.
When this server is placed at $s^*$, the mean number of disconnected communications due to server visit failure is minimized.
Here, $s^*={\rm argmin}_s N_d(s)$ for a tree network.
When this server is placed at $s^\dagger$, the worst disconnect probability due to server visit failure is minimized.
Here, $s^\dagger={\rm argmin}_s \max_t \disconnect$.

A similar metric and similar optimization for a ring network can be defined.

%In addition, two-server optimal placement is also possible by using Corollary \ref{double_cor}.

\section{Numerical examples}
\subsection{Disaster area data}
This paper uses maps showing past earthquake intensities stronger than 5- on the Japanese scale of 7 as $D$.
The maps are released by Japan Meteorological Agency (JMA) \cite{JMA}.
However, an earthquake of which the seismic center is in the sea was not used because only onshore parts of its intensity map are available.
That is, the whole shape of $D$ cannot be obtained if its seismic center is in the sea.
Therefore, the data regarding the earthquake on March 11, 2011, which caused the largest amount of damage in Japan after World War II, was not used.
As a result, eight samples of $D$ were obtained \cite{JMA_data}.
In addition, the earthquake that occurred in 1995 and caused the second largest amount of damage in Japan was used.
Because the maps released by JMA cover only earthquakes that occurred over these several years, another map was used \cite{hanshin}.
The map is that of intensity 7 area. 
In the remainder of this section, nine earthquakes are numbered in descending order of $\sizex{D}$.
%Typical examples are in Fig. \ref{earthquake_map}.
This figure shows that actual $D$ may not be convex or consist of a single part.
Many earthquakes are similar to Earthquake 1, but Earthquakes 3 and 9 are different.
Because of their shapes, their perimeters are large compared with their sizes.

%\begin{table}
%\begin{center} 
%\caption{Statistics of $D$} 
%\begin{tabular}{lrr}
%No.&$\sizex{D}$&$\lengthx{D}$\cr
%1&1157.9&123.76\cr
%2&348.87&71.989\cr
%3&338.20&159.28\cr
%4&308.12&62.992\cr
%5&245.13&68.728\cr
%6&93.180&41.142\cr
%7&66.483&40.113\cr
%8&61.482&34.889\cr
%9&43.042&70.214\cr
%\end{tabular}
%\end{center} 
%\end{table}

%\begin{figure}[htb] 
%\begin{center} 
%\includegraphics[width=8cm,clip]{earthquake_map} 
%\caption{Examples of earthquake map (Reproduced from source information in http://www.seisvol.kishou.go.jp/eq/suikei/index.html)} 
%\label{earthquake_map} 
%\end{center} 
%\end{figure}

Because the convexity of $D$ is assumed with most of the results mentioned above, the numerical examples can be used to confirm that these results are actually valid.

In the following numerical examples, simulation, as well as equations derived in this paper, was used.
In the simulation, the reference point $(x_d,y_d)$ is randomly and uniformly distributed in an area including $A_0$, and the direction $\theta$ is randomly and uniformly distributed in $[0,2\pi)$.
10,000 samples was used for each point in a graph.

\subsection{Real network}
To evaluate the accuracy of the approximation formula (Eq. \ref{ring_cor_eq}) and the tightness of the upper and lower bounds (Eq. \ref{ineq_cor}) for a real physical network, the network shown in Fig. \ref{route} was used.
A disk with a 10-km radius including the network was used as $A_0$.

\begin{figure}[htb] 
\begin{center} 
\includegraphics[width=8cm,clip]{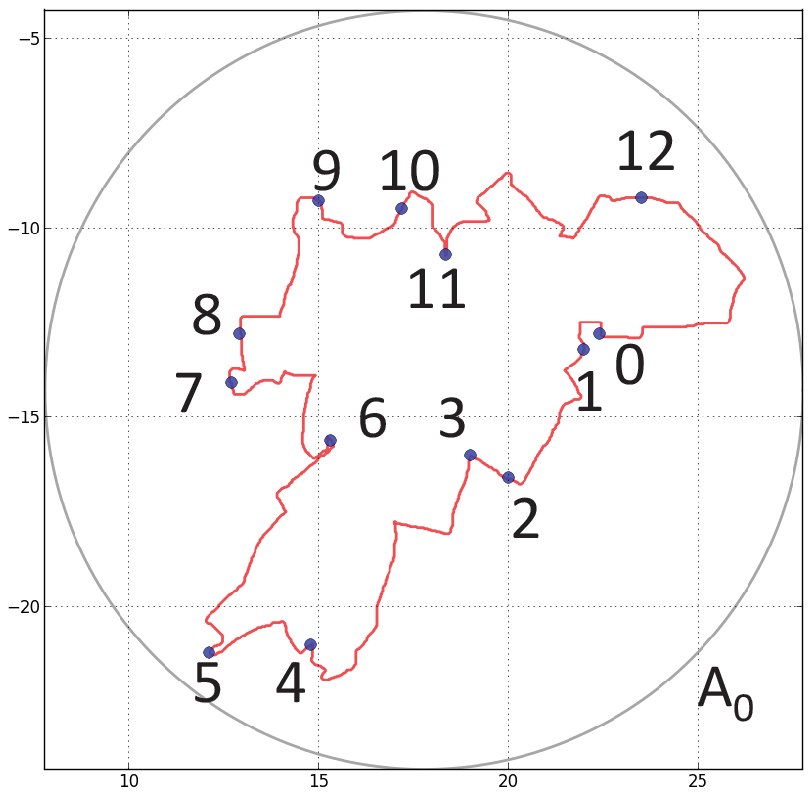} 
\caption{Example of $\route$} 
\label{route} 
\end{center} 
\end{figure}

In this network, the distance between node 5 and node 12 is the largest among two nodes.
Figure \ref{tokyo5-12Hisai} plots $\Pr(\route_1\cap D\intersect, \route_2\cap D\intersect)$ and Figure \ref{tokyo5-12} plots $\disconnect$ between these two nodes.

\begin{figure}[htb] 
\begin{center} 
\includegraphics[width=8cm,clip]{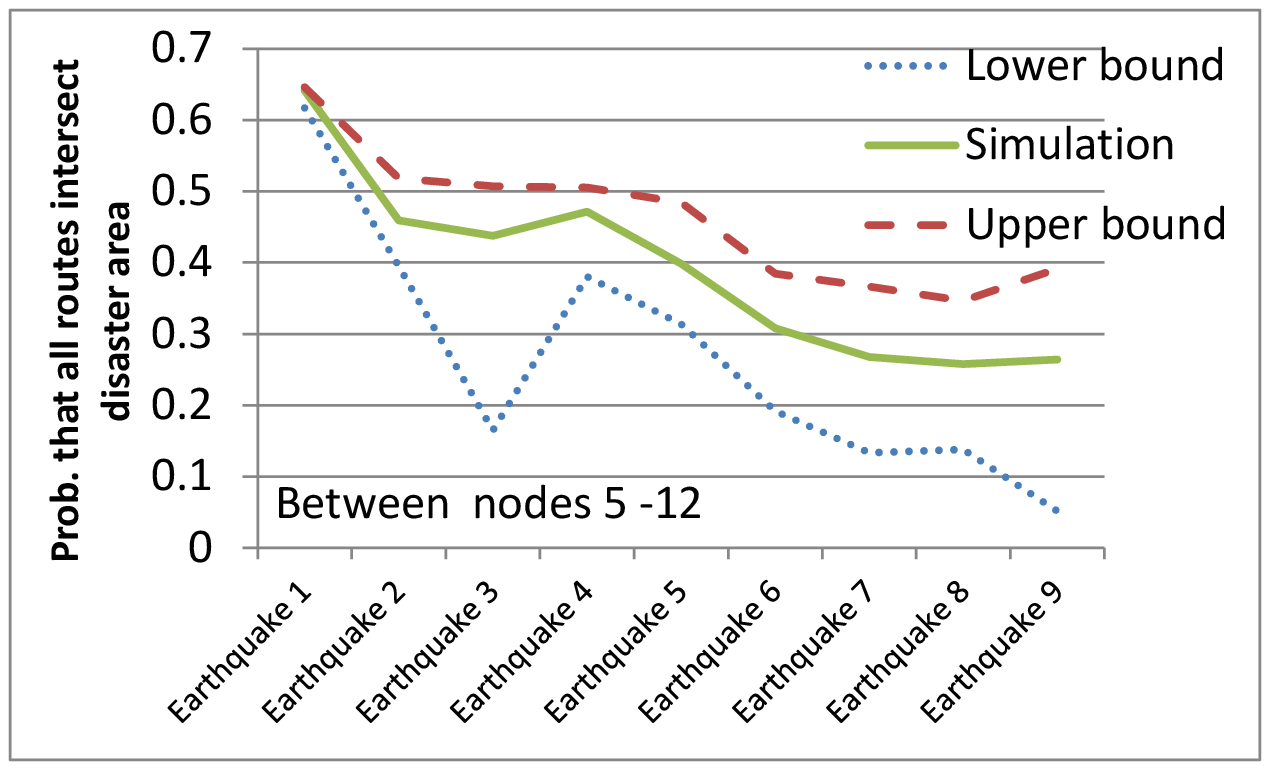} 
\caption{$\Pr(\route_1\cap D\intersect, \route_2\cap D\intersect)$ for real network} 
\label{tokyo5-12Hisai} 
\end{center} 
\end{figure}

\begin{figure}[htb] 
\begin{center} 
\includegraphics[width=8cm,clip]{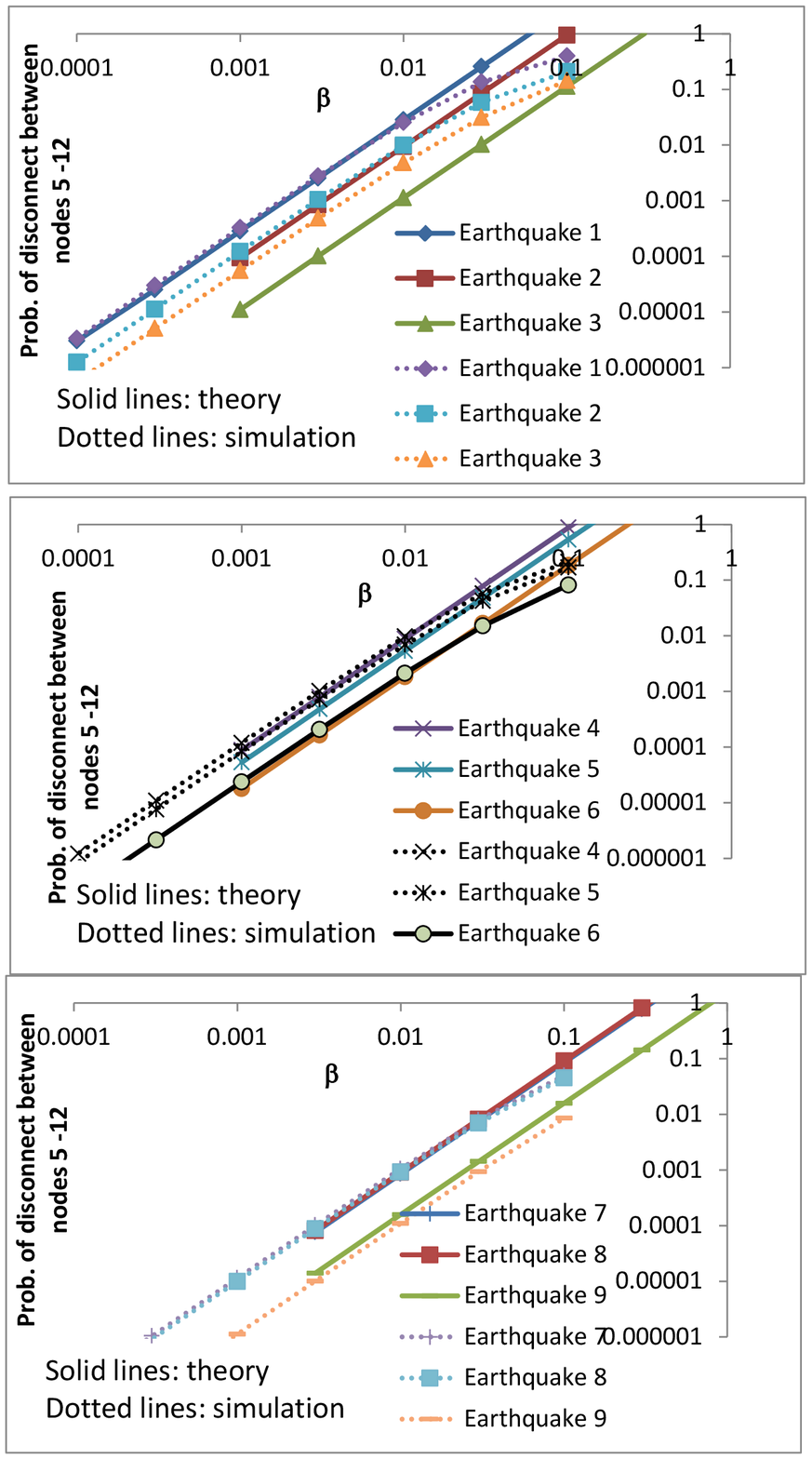} 
\caption{$\disconnect$ for real network} 
\label{tokyo5-12} 
\end{center} 
\end{figure}

In Fig. \ref{tokyo5-12Hisai}, $\Pr(\route_1\cap D\intersect, \route_2\cap D\intersect)$ derived using the simulation is almost within the upper and lower bounds, although the network is not convex.
Precisely, $\Pr(\route_1\cap D\intersect, \route_2\cap D\intersect)$ for Earthquake 1 is slightly larger than the upper bound.
The difference between the two bounds is normally larger when the earthquake is smaller, but it is not so simple.
Figure \ref{tokyo5-12} shows that $\disconnect$ has good agreement with the theoretical one for most of the earthquakes except for Earthquake 3, which has a complicated boundary.
For this earthquake, the theory underestimates $\disconnect$ by several times.

\subsection{Single route}
Consider the simple example networks in Figure \ref{num_ex_net1}.
Note that these networks have the same route length $L(\route)$.

According to Theorem \ref{single_theorem}, assuming convexity of $D$, $\Pr(\route \cap D\intersect)$ becomes smaller as $n_{fold}$ increases.
The simulation was conducted to verify this theorem for non-convex $D$s.
The results are shown in Fig. \ref{single_result1}.
It illustrates that this theorem is valid for actual earthquakes.
$\Pr(\route \cap D\intersect)$ decreased by about 10\% or less when $n_{fold}=7$ was used instead of $n_{fold}=1$.
It also shows that $\Pr(\route \cap D\intersect)$ often decreased as the size of $D$ decreased.
However, this is not always the case.  
(Although the earthquakes are numbered in descending order of size, $\Pr(\route \cap D\intersect)$ is not the order. That is, it depends on the shape of $D$.)
Its dependence depends on the network size or the route length.

\begin{figure}[htb] 
\begin{center} 
\includegraphics[width=8cm,clip]{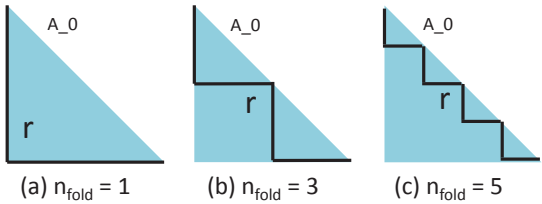} 
\caption{Simple example networks} 
\label{num_ex_net1} 
\end{center} 
\end{figure}

\begin{figure}[htb] 
\begin{center} 
\includegraphics[width=8cm,clip]{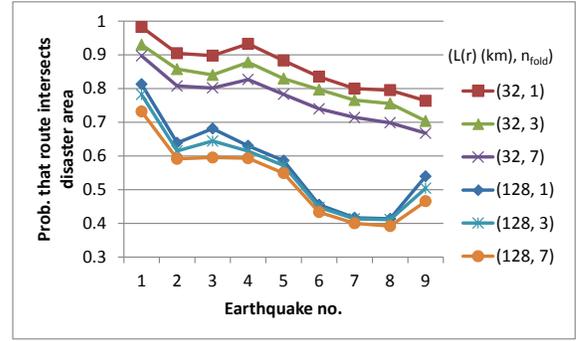} 
\caption{Verification of Theorem \ref{single_theorem}} 
\label{single_result1} 
\end{center} 
\end{figure}

For the same example networks, the probability of disconnection is evaluated through Corollary \ref{add_cor_tree} as well as simulation.
Figure \ref{single_result2} plots the result for Earthquake 1 as an example.
Here, ^^ ^^ Theory" is $\disconnect$ derived by Corollary \ref{add_cor_tree} and is independent of $n_{fold}$.
The other curves are derived by simulation.
Among them, ^^ ^^ Upper" means $\Pr(\route \cap D\intersect)$ obtained by simulation.

$\disconnect$ approaches $\Pr(\route \cap D\intersect)$, as $\beta$ increases because $\route \cap D\intersect$ means $s \not \leftrightarrow t$ when $\beta$ is large.
In addition, as expected, ^^ ^^ Theory" has good agreement with the simulation results when $\beta$ is small.

\begin{figure}[htb] 
\begin{center} 
\includegraphics[width=8cm,clip]{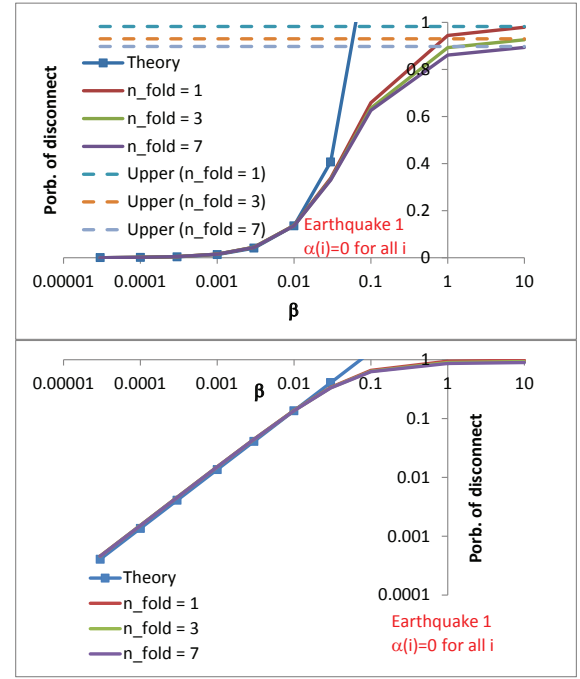} 
\caption{Verification of Corollary \ref{add_cor_tree}} 
\label{single_result2} 
\end{center} 
\end{figure}

\subsection{Ring network}
Consider the simple example networks in Figure \ref{num_ex_net2}-(1).

According to Theorem \ref{r_theo}, $\Pr(\route_1 \cap D\intersect,\route_2 \cap D\intersect)$ for a circle (Fig. \ref{num_ex_net2}-(1-a)) is the smallest, that for a hexagon (Fig. \ref{num_ex_net2}-(1-b)) is the second smallest, and that for a concave (Fig. \ref{num_ex_net2}-(1-c)) is the largest if $D$ is convex.
A simulation was conducted to verify this theorem for non-convex $D$s.

\begin{figure}[htb] 
\begin{center} 
\includegraphics[width=8cm,clip]{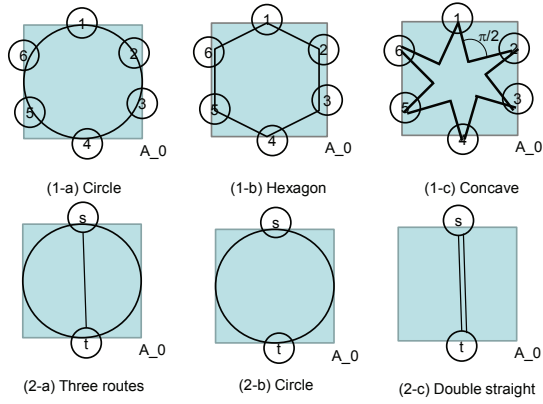} 
\caption{Simple example network2} 
\label{num_ex_net2} 
\end{center} 
\end{figure}

The results for Earthquakes 1 and 9 are shown in Fig. \ref{ring_result1}.
Although it is difficult to distinguish $\Pr(\route_1\cap D\intersect, \route_2\cap D\intersect)$ for a circle and that for a hexagon, the results seem consistent with Theorem \ref{r_theo}.
As the number of hops between $s$ and $t$ increases, $\Pr(\route_1\cap D\intersect, \route_2\cap D\intersect)$ increases.
This is because the probability of disconnection for the shorter of the two routes $\route_1, \route_2$ is dominant in $\Pr(\route_1\cap D\intersect, \route_2\cap D\intersect)$.

\begin{figure}[htb] 
\begin{center} 
\includegraphics[width=8cm,clip]{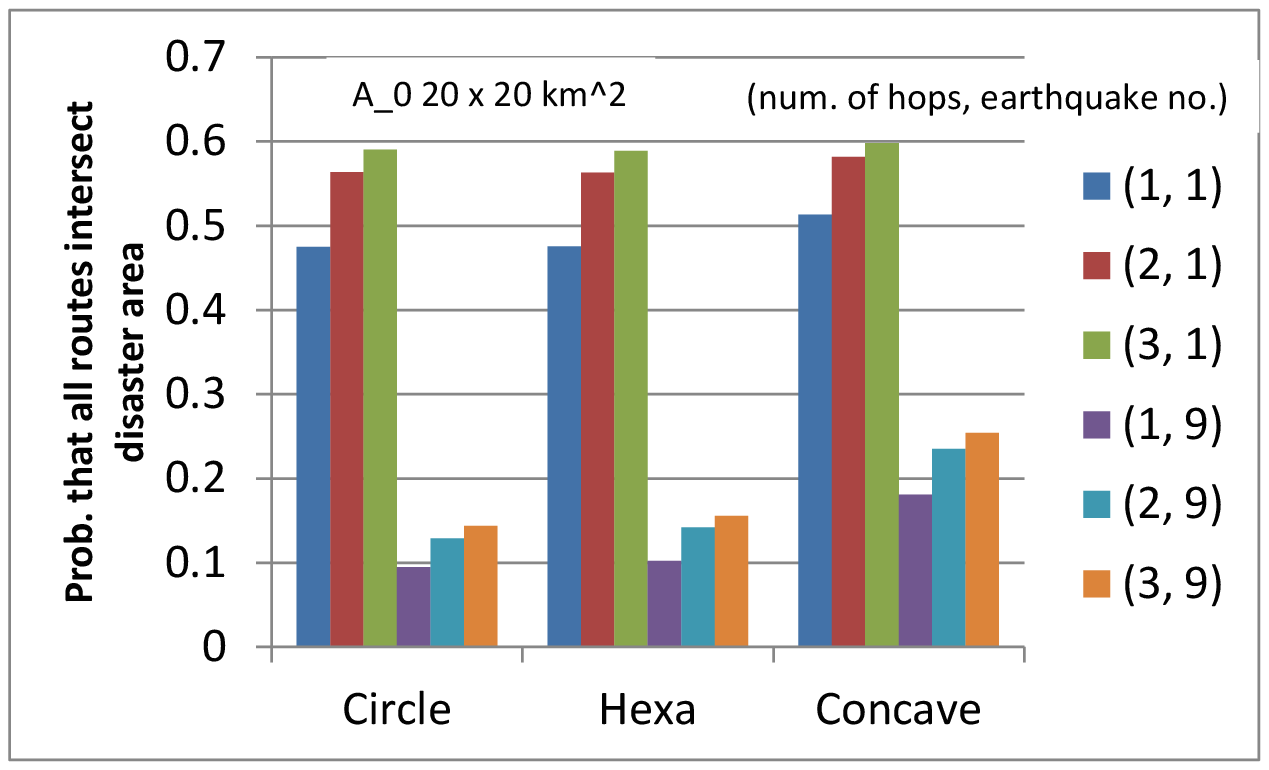} 
\caption{$\Pr(\route_1\cap D\intersect, \route_2\cap D\intersect)$ for Earthquake 1} 
\label{ring_result1} 
\end{center} 
\end{figure}

Figure \ref{ring_result2} evaluates $\disconnect$ through simulation and Corollary \ref{ring_cor}.
As expected, ^^ ^^ Theory" has good agreement with the simulation results when $\beta$ is small (Fig. \ref{ring_result2}-(b)).
Unfortunately, when the network becomes large (or $D$ becomes small), the accuracy of the theoretical result deteriorates (Fig. \ref{ring_result2}-(c)).
This is because the approximation error of Corollary \ref{ring_cor} increases.

\begin{figure}[htb] 
\begin{center} 
\includegraphics[width=8cm,clip]{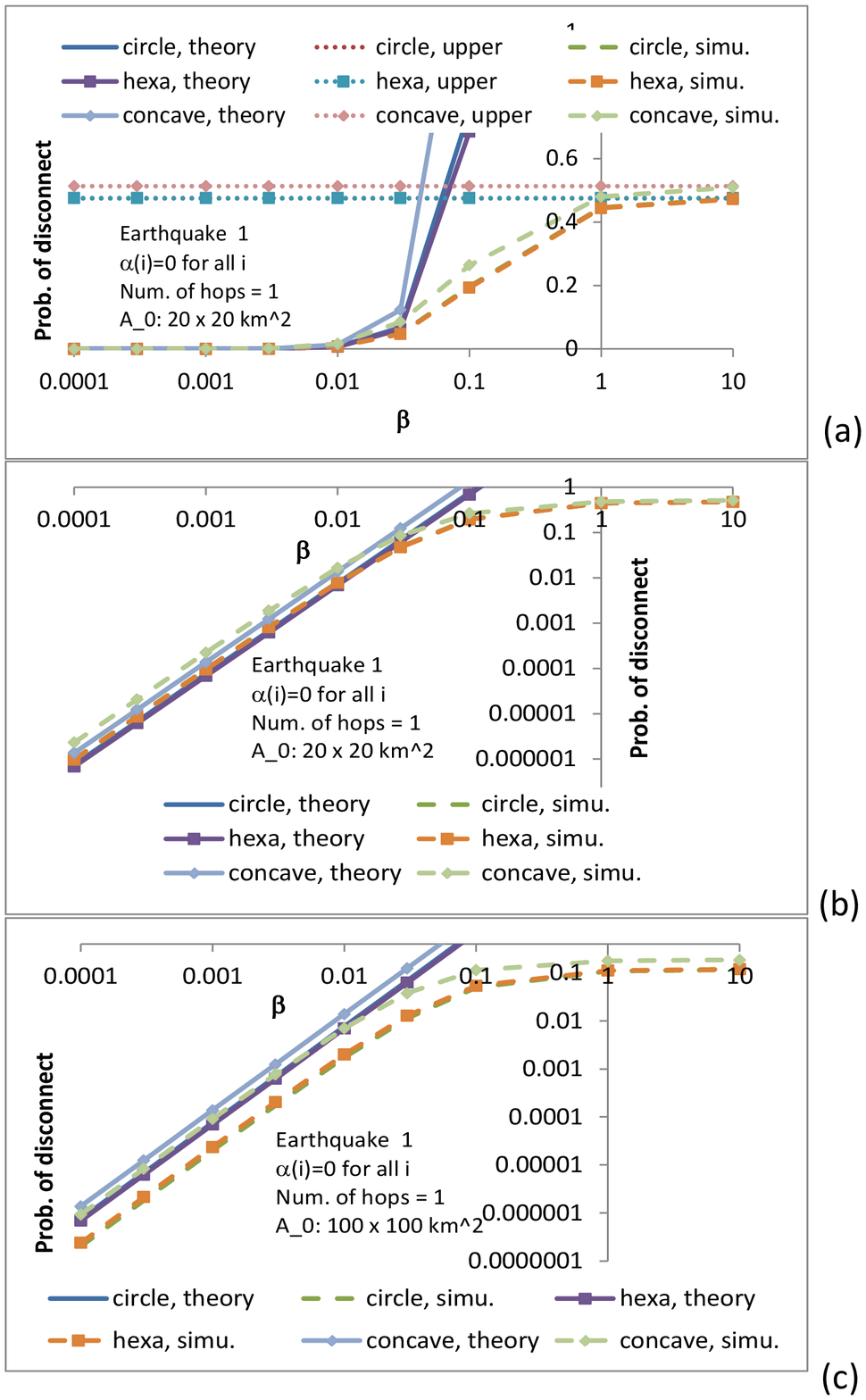} 
\caption{$\disconnect$ for Earthquake 1} 
\label{ring_result2} 
\end{center} 
\end{figure}

Consider the simple example networks in Figure \ref{num_ex_net2}-(2).
To verify Theorem \ref{r_theo}, a simulation was conducted.
Figure \ref{ring_result3} plots the probability that $\route_i\cap D\intersect$ for every $\route_i$.
As Theorem \label{r_intersect} asserts, this probability for ^^ ^^ Circle" is not improved by ^^ ^^ Three routes" and is better than that for ^^ ^^ Straight" for many earthquakes such as Earthquake 1.
In addition, when the network is small (or $D$ is large), the probabilities for ^^ ^^ Three routes", ^^ ^^ Circle," and ^^ ^^ Straight" become identical \cite{ToNsaito}.
However, there are exceptions.
Because Earthquake 9 consists of multiple separated subareas, Theorem \ref{r_theo} is not valid.
Here, ^^ ^^ Three routes" is the best and ^^ ^^ Straight" is the worst.

\begin{figure}[htb] 
\begin{center} 
\includegraphics[width=8cm,clip]{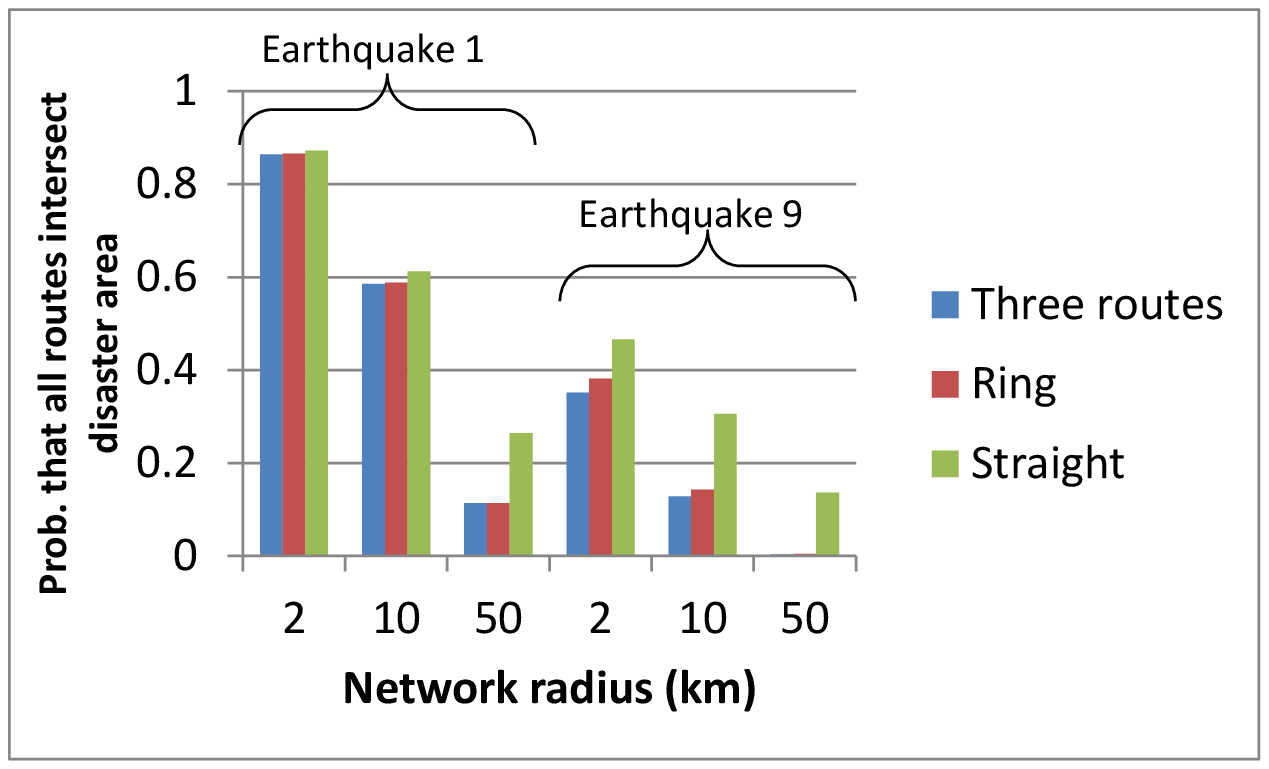} 
\caption{$\Pr(\forall i, \route_i\cap D\intersect)$} 
\label{ring_result3} 
\end{center} 
\end{figure}

\subsection{Optimization}
Let us investigate where a server should be placed in the network shown in Fig. \ref{route}.
Assume that there is a weak link in this network and that the link failure rate $\beta^\dagger$ of the weak link is larger than the link failure rate $\beta$ of other links.
The optimal location is determined by minimizing the worst disconnect probability due to server visit failure.
That is, the server should be located at $s^\dagger={\rm argmin}_s \max_t \disconnect$.

Figure \ref{weak} plots the result.
This figure demonstrates that $s^\dagger$ is far from the weak link, although a similar result is shown in \cite{infocom2014}.
However, this result depends on $\beta^\dagger$.
If the difference between $\beta^\dagger$ and $\beta$ is small, it is not clear that $s^\dagger$ is far from the weak link.
This trend becomes clear as the difference increases.

\begin{figure}[htb] 
\begin{center} 
\includegraphics[width=8cm,clip]{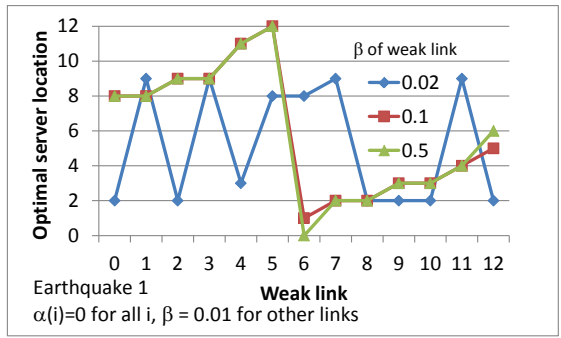} 
\caption{Optimal server location} 
\label{weak} 
\end{center} 
\end{figure}

\section{Conclusion}
The physical network route shape against a generic and bounded disaster area was analyzed.
Based on the analysis results, the following design rules were proposed: (i) a shorter zigzag route is appropriate to reduce the probability that the route intersects the disaster area, (ii) the route length should be reduced for a single route if an additive performance metric needs to be reduced because that metric is independent of the network shape for fixed route length, (iii) additional routes within a ring network are useless to decrease the probability that routes intersect the disaster area, (iv) a wider detour route for a ring network should be adopted to reduce that probability.
These analysis results as well as formulas evaluating the probability of disconnecting two given nodes were validated through empirical earthquake data.

The analysis results are also useful for spatial design of network elements rather than the physical network shape.
An optimal server placement was discussed as an example.

Although existing disaster management is based on protection and restoration, the proposed design method is the first step in disaster management aiming at disaster avoidance.  
%A network control based on such analyses will be proposed in the near future.

\begin{biography}
{Hiroshi Saito} graduated from the University of Tokyo with a B.E. degree in Mathematical Engineering in 1981, an M.E. degree in Control Engineering in 1983, and a Dr.Eng. in Teletraffic Engineering in 1992.

He joined NTT in 1983. He is currently an Executive Research Engineer at NTT Network Technology Labs. He received the Young Engineer Award of the Institute of Electronics, Information and Communication Engineers (IEICE) in 1990, the Telecommunication Advancement Institute Award in 1995 and 2010, and the excellent papers award of the Operations Research Society of Japan (ORSJ) in 1998. He served as an editor and a guest editor of technical journals such as Performance Evaluation, IEEE Journal of Selected Areas in Communications, and IEICE Trans. Communications.
He was the director of Journal and Transactions of IEICE, the organizing committee/program committee chairman of a few international conferences, and a program committee member of more than 30 international conferences. He is currently an editorial board member of Computer Networks. Dr. Saito is a fellow of IEEE, IEICE, and ORSJ, and a member of IFIP WG 7.3. His research interests include traffic technologies of communications systems, network architecture, and ubiquitous systems.

More information can be found at http://www9.plala.or.jp/hslab.
\end{biography}

\appendices
\section{Approximation formula for $\Pr(l(u,v)\subset D)$}
\subsection{Proposed approximation formula}
Under restricted conditions, $\Pr(l(u,v)\subset D)$ is given by Eqs. (6.44) and (6.52) in \cite{Santalo}.
However, the numerical examples using the nine earthquakes do not often satisfy these conditions.
As a result, large errors appear.
Therefore, an approximation formula for $\Pr(l(u,v)\subset D)$ is proposed.

\begin{lemma}\label{disk_approx}
When $D$ is a disk with radius $R_D$, $\Pr(l(u,v)\subset D)\approx g_1(d(u,v),R_D)/\sizex{\Omega(A_0)}$
Here,
\begin{eqnarray}
&&g_1(d(u,v),R_D)\cr
&\defeq& \cases{4\pi\{R_D^2\arccos(\frac{d(u,v)}{2R_D})\cr
 -\frac{d(u,v)R_D}{2}\sqrt{1-(\frac{d(u,v)}{2R_D})^2}\},&for $d(u,v)<2R_D$,\cr 0, &otherwise.}\cr
&&
\end{eqnarray}
\end{lemma}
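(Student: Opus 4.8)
The plan is to turn the event $\{\line(u,v)\subset D\}$ into a planar area computation and then evaluate that area in closed form. First, since $D$ is a disk, the kinematic density $\dD$ is invariant under the choice of the reference point rigidly attached to $D$ \cite{Santalo}, so one may place the reference point at the center of the disk. Then $\{\line(u,v)\subset D(x_d,y_d,\theta)\}$ does not depend on $\theta$, and integrating over $\theta\in[0,2\pi)$ merely contributes a factor $2\pi$, so that $\sizex{X(X_c)}=2\pi\,\sizex{A}$ with $A\defeq\{(x_d,y_d)\mid \line(u,v)\subset D(x_d,y_d)\}\subset\mathbb{R}^2$. Next, because $D$ is convex, $\line(u,v)\subset D$ holds exactly when both endpoints lie in $D$, that is, when the center $(x_d,y_d)$ is within distance $R_D$ of both $u$ and $v$. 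Hence $A$ is the intersection of the two disks of radius $R_D$ centered at $u$ and at $v$, and this intersection is empty when $d(u,v)\ge 2R_D$.

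For $d(u,v)<2R_D$ I would evaluate $\sizex{A}$ as twice the area of a circular segment with half-angle $\arccos(d(u,v)/(2R_D))$ (sector area minus triangle area), obtaining $\sizex{A}=2R_D^2\arccos(d(u,v)/(2R_D))-d(u,v)R_D\sqrt{1-(d(u,v)/(2R_D))^2}$; multiplying by the factor $2\pi$ gives $\sizex{X(X_c)}=g_1(d(u,v),R_D)$. Finally, since $u,v\in A_0$ and $A_0$ is convex, $\line(u,v)\subset A_0$, so $\line(u,v)\subset D$ forces $D\cap A_0\intersect$; therefore $X(X_c)\subset\Omega(A_0)$ and $\Pr(\line(u,v)\subset D)=\sizex{X(X_c)}/\sizex{\Omega(A_0)}=g_1(d(u,v),R_D)/\sizex{\Omega(A_0)}$.

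None of these steps is genuinely hard; the two points worth stating carefully are the reference-point invariance used to drop $\theta$ in the first step, and the reason the conclusion is written with $\approx$ rather than $=$. For a genuine disk the identity just derived is in fact exact, and the approximation sign is present because this lemma is intended as the building block for the next step, in which a general (possibly non-convex, multi-component) disaster area is replaced by an ``equivalent'' disk of radius $R_D$.
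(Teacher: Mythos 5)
Your proof is correct and follows essentially the same route as the paper: the paper's (very terse) proof also reduces the event to a planar region -- the lens bounded by the two circular arcs shown in its figure, i.e.\ the intersection of two disks of radius $R_D$ at distance $d(u,v)$ -- whose area, multiplied by the factor $2\pi$ from the free angle $\theta$, gives $g_1(d(u,v),R_D)$. Your choice to fix the segment and move the disk center is just the translation-dual of the paper's picture (center of $l(u,v)$ inside the lens), and your closed-form lens area and the containment $X(X_c)\subset\Omega(A_0)$ correctly supply the details the paper leaves to the figure.
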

\begin{proof}
The center of $l(u,v)$ can be located in the part surrounded by the red dotted curves in Fig. \ref{include_disk}.
Its size is given by $g_1(d(u,v),R_D)$.

\begin{figure}[htb] 
\begin{center} 
\includegraphics[width=8cm,clip]{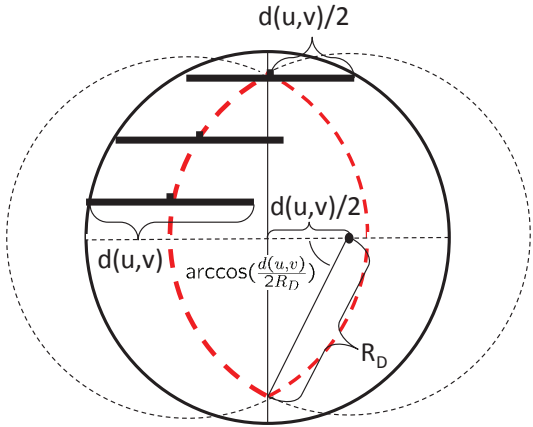} 
\caption{$\Pr(l(u,v)\subset D)$ for disk-shaped $D$} 
\label{include_disk} 
\end{center} 
\end{figure}

\end{proof}

\begin{lemma}\label{rect_approx}
When $D$ is a rectangle with edge lengths $a_D\leq b_D$, $\Pr(l(u,v)\subset D)\approx g_2(d(u,v),a_D,b_D)/\sizex{\Omega(A_0)}$.
For $d(u,v)\leq a_D$,
$$g_2(d(u,v),a_D,b_D)\defeq 2\pi a_D{ b_D}+2d(u,v)^2-4d(u,v)( a_D+ b_D), $$
for $a_D\leq d(u,v) \leq b_D$, 
\begin{eqnarray*}
&&g_2(d(u,v),a_D,b_D)\cr
&\defeq& 4a_Db_D(\pi/2-\arccos(\frac{a_D}{d(u,v)}))-4d(u,v)b_D\cr
&&+4b_D\sqrt{d(u,v)^2-a_D^2}-2a_D^2,
\end{eqnarray*}
for $b_D\leq d(u,v) \leq\sqrt{ a_D^2+b_D^2}$, 
\begin{eqnarray*}
&&g_2(d(u,v),a_D,b_D)\cr
&\defeq& 4a_D{ b_D}(\pi/2-\arccos(\frac{{ b_D}}{d(u,v)})-\arccos(\frac{{ a_D}}{d(u,v)}))\cr
&&+4{ a_D}\sqrt{d(u,v)^2-{ b_D}^2}+4{ b_D}\sqrt{d(u,v)^2-{ a_D}^2}\cr
&&-2{ b_D}^2-2{ a_D}^2-2d(u,v)^2, 
\end{eqnarray*}
and for $\sqrt{a_D^2+b_D^2} <d(u,v)$, $$g_2(d(u,v),a_D,b_D)\defeq 0.$$
\end{lemma}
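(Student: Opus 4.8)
The plan is to evaluate $\sizex{X(X_c)}$ exactly, where $X_c$ is the condition $l(u,v)\subset D(x_d,y_d,\theta)$; since $u$ and $v$ lie in the convex set $A_0$, so does $l(u,v)$, hence $X(X_c)\subset\Omega(A_0)$ and $\Pr(l(u,v)\subset D)=\sizex{X(X_c)}/\sizex{\Omega(A_0)}$, the denominator already being given by Eq. (\ref{size_two_convex}) with $\sizex{D}=a_Db_D$ and $\lengthx{D}=2(a_D+b_D)$. Because the rectangle is convex, $\{l(u,v)\subset D\}$ coincides with the event $\{u\subset D\}\cap\{v\subset D\}$, so it suffices to keep track of the two endpoints.

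First I would fix $\theta$ and work in the frame attached to $D$: there $l(u,v)$ has an axis-parallel bounding box of side lengths $d(u,v)|\cos\psi|$ and $d(u,v)|\sin\psi|$, where $\psi=\psi(\theta)$ is the angle between $l(u,v)$ and the $b_D$-side, and the translations $(x_d,y_d)$ that place this box inside the $a_D\times b_D$ rectangle form a rectangle of area $\max\{0,\,b_D-d(u,v)|\cos\psi|\}\cdot\max\{0,\,a_D-d(u,v)|\sin\psi|\}$ (the analogue of the ``part surrounded by the red dotted curves'' used for the disk in Lemma \ref{disk_approx}). Integrating over $\theta\in[0,2\pi)$ and using the reflection symmetry of the integrand, which reduces the $\theta$-integral to $4$ times the integral over $[0,\pi/2]$, gives
\[
\sizex{X(X_c)}=4\int_0^{\pi/2}\max\{0,\,b_D-d(u,v)\cos\psi\}\,\max\{0,\,a_D-d(u,v)\sin\psi\}\,d\psi .
\]
I would then split this integral according to which factor can become zero, which is exactly the case split appearing in the statement: for $d(u,v)\le a_D$ both factors are positive on $[0,\pi/2]$; for $a_D\le d(u,v)\le b_D$ only the second factor vanishes, at $\psi=\pi/2-\arccos(a_D/d(u,v))$; for $b_D\le d(u,v)\le\sqrt{a_D^2+b_D^2}$ both factors vanish and only $\psi\in(\arccos(b_D/d(u,v)),\,\pi/2-\arccos(a_D/d(u,v)))$ contributes, an interval that is nonempty precisely when $d(u,v)<\sqrt{a_D^2+b_D^2}$; and for $d(u,v)>\sqrt{a_D^2+b_D^2}$ the segment is longer than the diagonal of $D$, so no placement works and the measure is $0$. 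On each piece the integrand is a combination of $a_Db_D$, $\cos\psi$, $\sin\psi$ and $\sin\psi\cos\psi$, with the elementary antiderivative $G(\psi)=a_Db_D\,\psi+d(u,v)b_D\cos\psi-d(u,v)a_D\sin\psi+d(u,v)^2\sin^2\psi/2$; evaluating $G$ at the limits and rewriting the $\cos$ and $\sin$ of the $\arccos$-valued endpoints in terms of $\sqrt{d(u,v)^2-a_D^2}$ and $\sqrt{d(u,v)^2-b_D^2}$ reproduces the three nonzero branches of $g_2$.

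The argument is almost entirely bookkeeping, and I do not expect a genuine obstacle beyond the reduction above; the only delicate points are the endpoint algebra and checking that the branches of $g_2$ agree at $d(u,v)=a_D$, $d(u,v)=b_D$ and $d(u,v)=\sqrt{a_D^2+b_D^2}$, which both verifies continuity of $g_2$ and serves as a sanity check on the sign conventions. Finally I would note that the ``$\approx$'' in the statement reflects only the modelling step of replacing the disaster region by a rectangle (together with the convexity of $D$ and $A_0$ already used in Eq. (\ref{size_two_convex})): for a genuine rectangle the identity $\Pr(l(u,v)\subset D)=g_2(d(u,v),a_D,b_D)/\sizex{\Omega(A_0)}$ is exact.
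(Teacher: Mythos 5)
Your derivation is correct, but it is a genuinely different route from the paper: the paper gives no computation at all for this lemma and simply cites Eq.~(13) of \cite{mobileComp}, remarking only that the value doubles because the direction of $l(u,v)$ ranges over $[0,2\pi)$ here rather than $[0,\pi)$ there. You instead derive $g_2$ from scratch: fixing the relative angle $\psi$, observing that the admissible translations form a $\max\{0,b_D-d(u,v)\cos\psi\}\times\max\{0,a_D-d(u,v)\sin\psi\}$ rectangle, and integrating $4\int_0^{\pi/2}(\cdot)\,d\psi$ with the elementary antiderivative $G(\psi)$; I checked that the three case splits and the endpoint algebra reproduce all four branches of $g_2$ exactly, including continuity at $d(u,v)=a_D$, $b_D$, $\sqrt{a_D^2+b_D^2}$, and your factor $4$ (from $[0,2\pi)$ symmetry) is precisely the paper's doubling remark made explicit. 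What your approach buys is a self-contained, verifiable proof and a transparent explanation of the angle-convention factor; what the paper's citation buys is brevity and consistency with the source where the formula was first established. Your closing observation is also right and worth keeping: since $u,v\in A_0$ and $A_0$ is convex, $\{l(u,v)\subset D\}$ forces $D\cap A_0\intersect$, so the event's parameter set lies inside $\Omega(A_0)$ and the ratio $g_2(d(u,v),a_D,b_D)/\sizex{\Omega(A_0)}$ is exact for a true rectangle; the ``$\approx$'' only matters when Theorem \ref{include_theorem} substitutes a rectangle of matched perimeter and area for the actual disaster region.
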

The above lemma is given by Eq. (13) in \cite{mobileComp}.
(Because the direction of $l(u,v)$ is defined in $[0,2\pi)$ in this paper and in $[0,\pi)$ in \cite{mobileComp}, $g_2(d(u,v),a_D,b_D)$ doubles in Eq. (13) in \cite{mobileComp}.)

\begin{theorem}\label{include_theorem}
If $\lengthx{D}^2\geq 16\sizex{D}$, $\Pr(l(u,v)\subset D)\approx g_2(d(u,v),a_D,b_D)/\sizex{\Omega(A_0)}$, where $a_D=(\lengthx{D}-\sqrt{\lengthx{D}^2-16\sizex{D}})/4$, $b_D=(\lengthx{D}+\sqrt{\lengthx{D}^2-16\sizex{D}})/4$.
Otherwise, $\Pr(l(u,v)\subset D)\approx g_1(d(u,v),\frac{\lengthx{D}}{2\pi})/\sizex{\Omega(A_0)}$
\end{theorem}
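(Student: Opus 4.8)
The plan is to reduce the statement to the two already-established special cases by replacing the actual convex region $D$ with a canonical convex set that matches $D$ as closely as a one- or two-parameter family permits, and then quoting Lemmas \ref{rect_approx} and \ref{disk_approx}. Concretely, I would summarise $D$ by its perimeter $\lengthx{D}$ and area $\sizex{D}$ and seek a rectangle (two parameters) or, failing that, a disk (one parameter) with the same summary statistics.

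First I would treat the elongated case. A rectangle with sides $a_D\leq b_D$ has perimeter $2(a_D+b_D)$ and area $a_Db_D$; setting these equal to $\lengthx{D}$ and $\sizex{D}$ makes $a_D,b_D$ the roots of $x^2-(\lengthx{D}/2)x+\sizex{D}=0$, namely $x=(\lengthx{D}\pm\sqrt{\lengthx{D}^2-16\sizex{D}})/4$. I would then observe that these roots are real and non-negative precisely when $\lengthx{D}^2\geq 16\sizex{D}$, which is exactly the hypothesis of the first case, and substitute the resulting $a_D,b_D$ into $g_2$ of Lemma \ref{rect_approx}, dividing by $\sizex{\Omega(A_0)}$. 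For the complementary round case $\lengthx{D}^2<16\sizex{D}$, no rectangle can have both the required perimeter and area, since every rectangle satisfies $\lengthx{\cdot}^2\geq 16\sizex{\cdot}$ with equality only for a square; here I would fall back to a disk and match only the perimeter, so a disk of radius $R_D$ has perimeter $2\pi R_D=\lengthx{D}$, giving $R_D=\lengthx{D}/(2\pi)$, and Lemma \ref{disk_approx} then produces the second formula. The isoperimetric inequality $\lengthx{D}^2\geq 4\pi\sizex{D}$ guarantees that the two cases exhaust all convex $D$.

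The main obstacle is not computational but one of justification: because the result is an approximation (hence the $\approx$), the argument must make plausible that perimeter and area are the right features to carry over from the exact formulas, and that matching the perimeter --- rather than the area --- is the sensible choice in the disk fallback. I expect most of the writeup to go into this, for instance by noting that among the canonical shapes available the chosen one has the $\lengthx{\cdot}^2/\sizex{\cdot}$ ratio closest to that of $D$, rather than into any algebra, which amounts to a one-line quadratic solve.
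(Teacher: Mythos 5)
Your proposal is correct and matches the paper's own proof: the paper likewise chooses $a_D,b_D$ as the solutions of $\sizex{D}=a_Db_D$, $\lengthx{D}=2(a_D+b_D)$ (which exist exactly when $\lengthx{D}^2\geq 16\sizex{D}$) and invokes Lemma \ref{rect_approx}, falling back to Lemma \ref{disk_approx} with $R_D=\lengthx{D}/(2\pi)$ otherwise. Your added discussion of why perimeter/area matching is the right heuristic goes slightly beyond the paper, which simply notes the alternative $R_D=\sqrt{\sizex{D}/\pi}$ without further justification.
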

\begin{proof}
If $\lengthx{D}^2\geq 16\sizex{D}$, there exist $a_D$ and $b_D$ satisfying $\sizex{D}=a_Db_D$ and $\lengthx{D}=2(a_D+b_D)$.
Then, apply Lemma \ref{rect_approx}.
Otherwise, apply Lemma \ref{disk_approx} with $R_D=\frac{\lengthx{D}}{2\pi}$.

%In addtion to the two lemmas mentioned above, the following inequality provides the equation above.
%$\Pr(l(u,v)\subset D)\geq \Pr(u\subset D)+\Pr(v\subset D)=4\pi\sizex{D}/\sizex{\Omega(A_0)}$ if $d(u,v)>\dia(D)$.
\end{proof}
There is another option that $R_D=\sqrt{\sizex{D}/\pi}$ in the approximation formula above.

\subsection{Numerical examples}
The following numerical examples are evaluated to compare the accuracy of the proposed approximation (Theorem \ref{include_theorem}) and that of Eqs. (6.44) and (6.52) in \cite{Santalo}.
Because Eq. (6.44) requires the angle of each vertex of $D$, assume that the angles of all the vertexes are the same and that the number of vertexes is 3, 4, 6, or 12.

Nine earthquakes are used as $D$.
The line segment length $d(u,v)(=L(l(u,v)))$ is $\dia(D)\times$0.2, 0.4, 0.6, or 0.8, and $A_0$ is a square with an edge length $2\dia(D)$.

%Figure \ref{absolute_error} plots the average over nine earthquakes of the relative absolute errors of each approximation regarding $\Pr(l(u,v)\subset D)$.
Figure \ref{prob_include} plots the median of the normalized $\Pr(l(u,v)\subset D)$, that is, $\Pr(l(u,v)\subset D)$ of each approximation divided by that of the simulation, over nine earthquakes.

This figure shows that the proposed approximation clearly outperforms Eqs. (6.44) and (6.52) in \cite{Santalo}.
In particular, the proposed approximation provides positive $\Pr(l(u,v)\subset D)$, although Eqs. (6.44) and (6.52) in \cite{Santalo} often result in negative $\Pr(l(u,v)\subset D)$.
As the line segment length increases, the relative error increases.
This is because $\Pr(l(u,v)\subset D)$ becomes small and small errors result in large relative errors.

\begin{figure}[htb] 
\begin{center} 
\includegraphics[width=8cm,clip]{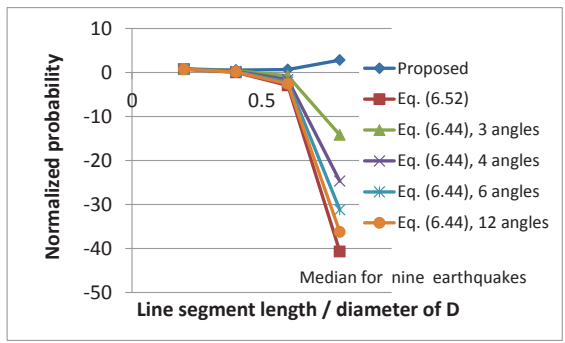} 
\caption{Normalized $\Pr(l(u,v)\subset D)$ of each approximation} 
\label{prob_include} 
\end{center} 
\end{figure}

\end{document}